\title{The Complexity of Unavoidable Word Patterns}
\author{Paul Sauer}
\address{University of South Africa}
\email{paul.v.sauer@gmail.com}
\keywords{regularity, avoidability, unavoidablility}
\theoremstyle{definition}
\newcommand{\href}{}
\newcommand{\Sn}{S_n}
\newcommand{\la}{\langle}
\newcommand{\ra}{\rangle}
\newcommand{\pipat}{\la\pi\ra}
\newcommand{\rN}{[r]^N}
\newcommand{\N}{\mathbb{N}}
\newcommand{\al}{\alpha}
\newcommand{\alp}{\al_{\phi}}
\newcommand{\Ga}{G_{\alpha}}
\newcommand{\ov}{\overline}
\newcommand{\zp}{z_+}
\newcommand{\A}{A_{\al}}
\begin{document}

\begin{abstract}    % type your abstract below
The avoidability, or unavoidability of patterns in words over finite alphabets has been studied extensively. The word $\alpha$ over a finite set $A$ is said to be unavoidable for an infinite set $B^+$ of nonempty words over a finite set $B$ if, for all but finitely many elements $w$ of $B^+$, there exists a semigroup morphism $\phi:A^+\rightarrow B^+$ such that $\phi(\alpha)$ is a factor of $w$. We present various complexity-related properties of unavoidable words. For words that are unavoidable, we provide an upper bound to the
lengths of words that avoid them. In particular, for a pattern $\alpha$ of length $n$ over an alphabet of size $r$, we give a concrete function $N(n,r)$ such that no word of length $N(n,r)$ over the alphabet of size $r$ avoids $\alpha$.

A natural subsequent question is how many unavoidable words there are. We show that the fraction of words that are unavoidable drops exponentially fast in the length of the word. This allows us to calculate an upper bound on the number of unavoidable patterns for any given finite alphabet.

Subsequently, we investigate computational aspects of unavoidable words. In particular, we exhibit concrete algorithms for determining whether a word is unavoidable. We also prove results on the computational complexity of the problem of determining whether a given word is unavoidable.

\end{abstract}

\maketitle

%%%%%%%%%%%%%%%%%%%%   Start of main body of article
\section{Introduction}

Let $\N$ denote the nonnegative integers. If $A$ is a finite set, we write $A^*$ for the set $\{a_1a_2\dots a_n \mbox{ }  | \mbox{ } a_i \in A \mbox{ and }  n\in\N\}$ of {\sl words} over $A$, while $A^+$ is the subset of all nonempty words in $A^*$. 
  For $n\in\N$ we symbolize the set of words of length $n$ over $A$ by $A^n$. Here the {\sl length} of a word is defined in the conventional sense: if $w\in A^*$ and $w=a_1a_2\dots a_n$ with each $a_i\in A$, then the length $|w|$ of $w$ is $n$.  The set $A$ above is sometimes called an {\sl alphabet} and its members are called {\sl letters}. We say that the word $v = a_1a_2\dots a_m$ is a {\sl factor} of the word $w=b_1b_2\dots b_n$ if there is an $i$ such that, for $1 \le j \le m$, we have $a_j = b_{{i_0}+j}$.

For a word $w$ and letters $x_1, x_2, \dots, x_k$, we denote by $w^{x_1, x_2, \dots, x_k}$ the word derived from $w$ by deleting all occurrences of each of the $x_i$.

We say that a word $w$ over a finite alphabet $B$ {\sl reflects} a word $\alpha$ (or a {\sl pattern} $\alpha$, for the sake of clarity)  over a finite alphabet $A$ whenever there is a semigroup morphism $\phi:A^+ \rightarrow B^+$ such that $\phi(\alpha)$ is a factor of $w$. The pattern $\alpha$ is called {\sl unavoidable} for a set $X$ of words over a finite alphabet if all but finitely many $w\in X$ reflect $\al$. The pattern $\al$ is simply called unavoidable if the preceding statement holds for every set over a finite alphabet. Otherwise $\al$ is called {\sl avoidable}.

The study of combinatorial patterns is one of the most repeated themes in Mathematics \cite{devlin1996mathematics}, \cite{flo2009}. Among these studies, the unavoidability of patterns in words over finite alphabets has been explored extensively. Over the last century, this theme has resurfaced repeatedly \cite{thue1906},  \cite{morse1944}, \cite{bean1979}, \cite{zim84}, \cite{roth92}, \cite{Schmidt1987}. In the last decade, there has been a resurgence in the investigation of unavoidability \cite{sapir2016}. Thue \cite{thue1906} proved that $xxx$ is avoidable on the binary alphabet and $xx$ is avoidable on the alphabet of size $3$. Bean et al. \cite{bean1979} conducted an extensive investigation into the avoidability of patterns. One central discovery of this investigation is the notion of a letter that is {\sl free} for a pattern.

\newtheorem{freedef}[thm]{Definition}
\begin{freedef}\label{thm:freedef}
Let A be a finite alphabet and let $w\in A^+$. A letter $x\in A$ is free for $w$ if $x$ occurs in $w$ and there is no integer $n > 0$ and $a_1, a_2,\dots,a_n,b_1,b_2,\dots,b_n$ such that 
 \begin{align*}
 x&a_1\\
 b_1&a_1\\
 b_1&a_2\\
 b_2&a_2\\
 &\vdots\\
 b_n&x\\
 \end{align*}
 are all factors of $w$.
\end{freedef} 

Free letters are connected to the phenomenon of unavoidability be the following lemma, whose proof appears in \cite{bean1979}.

\newtheorem{bemfree}[thm]{Lemma}
\begin{bemfree}\label{thm:bemfree}
Suppose $\alpha$ is a pattern with a free letter $x$. If $\alpha^x$ is unavoidable, then so $\alpha$.
\end{bemfree}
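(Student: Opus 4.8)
The plan is to argue the contrapositive: assuming $\alpha$ is avoidable, I will produce arbitrarily long words over a finite alphabet that avoid $\alpha^x$, contradicting the unavoidability of $\alpha^x$. First I would record the standard equivalence that a pattern is avoidable exactly when there are arbitrarily long words over some fixed finite alphabet $B$ that reflect it nowhere, and fix such a word $w$ over $B$ avoiding $\alpha$. A naive attempt --- take a morphism $\psi$ witnessing that $w$ reflects $\alpha^x$ and simply extend it by assigning a value to $x$ --- fails at once, because $\psi(\alpha^x)$ sits as a \emph{contiguous} factor of $w$ and leaves no room to insert the (necessarily nonempty) image of $x$ at the positions where $x$ occurs in $\alpha$. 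The freeness hypothesis is precisely what repairs this gap.

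The engine of the proof is a reformulation of Definition~\ref{thm:freedef} as a graph-separation statement. I would build the bipartite adjacency graph $\mathcal{G}$ whose two vertex classes are copies $a_L$ and $a_R$ of each letter $a$, with an edge joining $a_L$ and $b_R$ whenever $ab$ is a factor of $\alpha$. The alternating chain of factors $xa_1, b_1a_1, b_1a_2, \dots, b_nx$ forbidden by freeness is exactly a walk in $\mathcal{G}$ from $x_L$ to $x_R$; hence $x$ being free means that $x_L$ and $x_R$ lie in different connected components of $\mathcal{G}$. This separation yields a consistent $2$-colouring of the letter occurrences of $\alpha$ that respects every adjacency, and it is this colouring that records, unambiguously, on which side each copy of $x$ must be reinserted.

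Using the colouring I would decorate $w$ into a word $w'$ over an enlarged finite alphabet $B'$, obtained by adjoining a fresh filler letter $d$ to $B$ and inserting $d$ into exactly those gaps of $w$ selected by the component structure, so that each such gap becomes a genuine position able to host the image of $x$. The heart of the argument is then the lifting claim: any morphism $\psi$ with $\psi(\alpha^x)$ a factor of $w'$ can be upgraded to a morphism $\phi$ with $\phi(\alpha)$ a factor of $w$, by letting $\phi(x)$ read off the filler blocks and restricting the remaining letters back to $B$. Because $x_L$ and $x_R$ are separated in $\mathcal{G}$, the reinsertion of $x$ at all of its occurrences can be carried out simultaneously without producing a cyclic conflict, so $\phi$ is well defined; this is the single step where freeness is indispensable. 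The lifted $\phi$ exhibits a reflection of $\alpha$ in $w$, contradicting the choice of $w$.

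I expect the main obstacle to be the third step --- calibrating the decoration $w \mapsto w'$. Two competing requirements must be met at once: the inserted fillers must create enough room that every reflection of $\alpha^x$ in $w'$ lifts to a reflection of $\alpha$ in $w$, yet they must not be so permissive that $w'$ acquires spurious reflections of $\alpha^x$ not inherited from $w$. Threading this needle, and verifying that the $2$-colouring supplied by the component separation of $\mathcal{G}$ does exactly this, is where the real work lies; the surrounding reductions --- from avoidability to arbitrarily long avoiding words, and the final contradiction --- are routine.
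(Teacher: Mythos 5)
Your contrapositive framing is legitimate and your graph reformulation of freeness is correct (it is essentially Lemma~\ref{thm:gfree} of this paper), but the core construction has a genuine gap, in two places. First, the decoration is not well defined: the components of $\mathcal{G}$ partition the \emph{letters of $\alpha$}, and before any morphism has been fixed there is no correspondence between letters of $\alpha$ and positions of $w$, so ``inserting $d$ into exactly those gaps of $w$ selected by the component structure'' does not specify a word $w'$. Second, and more fundamentally, the lifting claim defeats itself however the decoration is calibrated. If $\phi(x)$ ``reads off the filler blocks'', then $\phi(\alpha)$ contains the fresh letter $d$, so it is at best a factor of $w'$, not of $w$; that contradicts nothing, since all we know is that $w$ avoids $\alpha$ (and knowing that $w'$ avoids $\alpha$ is a claim of the same kind as the lemma itself --- indeed false in general, since inserting one repeated fresh letter creates new reflections: $0d1d2$ contains $d1d$, a morphic image of $xyx$, while $012$ contains none). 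If instead you erase the fillers so as to land back inside $w$, then the images of the two letters flanking each occurrence of $x$ become contiguous factors of $w$, leaving no room for a nonempty $\phi(x)$ --- precisely the naive failure you set out to repair. The root cause: in any reflection of $\alpha$ in $w$, the image of $x$ must be a nonempty factor of $w$ itself, so no insertion of fresh material into $w$ can ever supply it.

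For reference, the paper does not prove this lemma (it defers to \cite{bean1979}), but the machinery it imports shows what the working mechanism is, and it is the opposite of decoration: the room for $x$ is found \emph{inside} $w$. One picks a letter $a$ occurring in $w$ and writes a long factor of $w$ in block form $a^{i_1}X_1a^{i_2}X_2\cdots a^{i_t}X_t$ with each $X_i$ over $B\setminus\{a\}$; either some letter of $B$ is absent from arbitrarily long factors of $w$, in which case one recurses on a strictly smaller alphabet, or else all blocks are bounded, so the blocks form a finite alphabet over which $w$ recodes as a long word. Unavoidability of $\alpha^x$ applied over that block alphabet yields a reflection of $\alpha^x$ in $w$ that is \emph{aligned with block boundaries}, and only then does freeness act --- via exactly your two-colouring --- to insert suitable powers of $a$ as the images of $x$; this is Lemma~\ref{thm:morph}, and it is how the proof of the Main Theorem~\ref{thm:mainres} runs. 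Two ingredients your proposal lacks are indispensable in this scheme: the alphabet-size dichotomy, and the block alignment of the given reflection of $\alpha^x$. An arbitrary reflection of $\alpha^x$ in a decorated word carries no alignment whatsoever, and without it the insertion step cannot even be formulated.
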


A surprising, complete characterization of unavoidable patterns follows from Lemma \ref{thm:bemfree}. This is commonly known as the Bean, \index{Ehrenfeucht}Ehrenfeucht and \index{McNulty} McNulty (B.E.M.) Theorem.

\newtheorem{bemmain}[thm]{B.E.M. Theorem}
\begin{bemmain}\label{thm:bemmain}
A pattern $\alpha$ is unavoidable if and only if it is reducible to the empty word by iteratively performing one of the following operations on the pattern: 
\begin{enumerate}
\item deleting every occurrence of a free letter, or
\item replacing all occurrences of some letter $x$ occurring in $\alpha$ by a different letter $y$, also occurring in $\alpha$. 
\end{enumerate}
\end{bemmain}

We refer to the second operation as the {\sl identification of letters}. We can extend the definition of free letters to {\sl free sets}.
\newtheorem{freesetdef}[thm]{Definition}
\begin{freesetdef}\label{thm:freesetdef}
Let A be a finite alphabet and let $w\in A^+$. A set $X\subseteq A$ is free for $w$ if, for every pair of letters $x,y \in X$, there is no $n > 0$ and $a_1, a_2,\dots,a_n,b_1,b_2,\dots,b_n$ such that 
 \begin{align*}
 x&a_1\\
 b_1&a_1\\
 b_1&a_2\\
 b_2&a_2\\
 &\vdots\\
 b_n&y\\
 \end{align*}
 are all factors of $w$.
\end{freesetdef} 

The notion of free sets allows us to reformulate the B.E.M. Theorem in a way that is sometimes more convenient for reasoning about patterns. 
\newtheorem{bembetter}[thm]{Theorem}
\begin{bembetter}\label{thm:bembetter}
A pattern is unavoidable if and only if it is reducible to the empty word by iteratively deleting free sets.
\end{bembetter}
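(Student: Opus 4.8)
The plan is to deduce this reformulation from the B.E.M.\ Theorem (Theorem~\ref{thm:bemmain}) by showing that the two reduction procedures succeed on exactly the same patterns. Writing $w^{X}$ for the word obtained from $w$ by deleting every occurrence of every letter in a set $X$, the whole argument rests on the following bridging lemma: if $X$ is free for $w$ and $w'$ is obtained from $w$ by identifying all letters of $X$ to a single new letter $z$, then $z$ is free for $w'$; and conversely, if that merged letter is free then $X$ was free. Granting this, a single free-set deletion ``delete $X$'' is exactly simulated inside the B.E.M.\ system by first identifying the letters of $X$ to one letter $z$ (a block of identifications) and then deleting the now-free letter $z$, the net effect being passage from $w$ to $w^{X}$.

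With the lemma in hand the two implications run as follows. To see that free-set reducibility implies unavoidability, I would replace each free-set deletion in a reduction $\alpha\to\cdots\to\varepsilon$ by its simulating block of identifications followed by one free-letter deletion; this manufactures a B.E.M.\ reduction of $\alpha$ to the empty word, so $\alpha$ is unavoidable by Theorem~\ref{thm:bemmain}. For the converse I would start from a B.E.M.\ reduction of an unavoidable $\alpha$ and induct on $|\alpha|$. Since identifications alone cannot shorten a nonempty word, the reduction contains a first deletion step; let $G$ be the set of original letters of $\alpha$ that have been merged into the letter deleted there. The easy (``push-forward'') half of the bridging lemma, together with the fact that performing an identification can only create new connecting chains and never destroy them, shows that $G$ is a free set for $\alpha$. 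The tail of the reduction witnesses that $\alpha^{G}$ is still reducible in the B.E.M.\ system, hence unavoidable, and it is strictly shorter, so the induction hypothesis reduces it by deleting free sets; prepending the deletion of $G$ completes a free-set reduction of $\alpha$.

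The real work is the forward half of the bridging lemma, and this is where I expect the main difficulty: identifying the letters of $X$ can merge two formerly unrelated letters and so manufacture brand-new chains through $z$, so freeness of $z$ in $w'$ is not automatic. The plan is to assume $z$ is not free, fix a \emph{shortest} connecting chain running from $z$ back to $z$ in $w'$, and lift it to $w$. Two facts force this shortest chain to meet $z$ only at its two endpoints: first, any interior occurrence of $z$ would expose a strictly shorter closed chain and contradict minimality; second, $X$ being free forbids two of its letters from being adjacent in $w$ — such an adjacency is already a one-step connecting chain — so $w'$ contains no factor $zz$, which disposes of the one boundary case the minimality argument does not directly cover. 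Once the chain touches $z$ only at its endpoints, each interior letter lifts unchanged to $w$, while the two endpoint occurrences of $z$ lift to (possibly distinct) letters $x_0,y_0\in X$, producing a genuine connecting chain from $x_0$ to $y_0$ in $w$ and contradicting the freeness of $X$. The converse half of the lemma and the monotonicity invoked above are by contrast immediate, since the identification morphism carries every chain of $w$ to a chain of $w'$.
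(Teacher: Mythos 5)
Your argument cannot be compared step-by-step with the paper's, because the paper gives no proof of Theorem~\ref{thm:bembetter} at all: it simply attributes the reformulation to Sapir \cite{sap87} (see also Zimin \cite{zim84}). Your proposal is therefore a self-contained derivation from the B.E.M.\ Theorem~\ref{thm:bemmain}, and it is correct in all essentials. The bridging lemma is the right pivot, and you correctly locate the only real difficulty in its forward half; your treatment of it is sound: in a shortest closed chain for $z$, an interior occurrence of $z$ among the $a_i$ (or $b_i$) yields a strictly shorter closed chain from the corresponding prefix (or suffix), the boundary cases force a factor $zz$, and $zz$ is ruled out because two adjacent letters $x,y\in X$ in $w$ already give the one-step chain $xy,\ xy,\ xy$ contradicting freeness of the pair $(x,y)$; after that every interior letter of the minimal chain lifts to itself (the identification is injective off $X$), so the chain lifts to one connecting two members of $X$ in $w$. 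Two small repairs are needed, neither fatal. First, operation (2) of Theorem~\ref{thm:bemmain} only permits identification to a letter already occurring in the pattern, so you should collapse $X$ onto a chosen $x_0\in X$ rather than onto a brand-new letter $z$; nothing else in the argument changes. Second, in your converse direction, write $u$ for the word just before the first deletion, $y$ for the deleted letter, and $\sigma$ for the composite of the initial identifications, so $u=\sigma(\alpha)$ and $G=\sigma^{-1}(y)$ restricted to the letters of $\alpha$; the tail of the B.E.M.\ reduction witnesses reducibility of $u^{y}$, not of $\alpha^{G}$ itself, and these are different words in general. You need the (easy) commutation fact $\sigma(\alpha)^{y}=\sigma'(\alpha^{G})$, where $\sigma'$ is $\sigma$ restricted to letters outside $G$, so that $\alpha^{G}$ is B.E.M.-reducible by first performing identifications realizing $\sigma'$ fiber by fiber (each step legal, since both letters occur) and then following the tail, all up to a bijective renaming of letters under which B.E.M.-reducibility is clearly invariant. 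With these two patches your induction closes and the proof is complete.
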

This reformulation is due to Sapir \cite{sap87}. See also Zimin \cite{zim84}.
The proof of Theorem \ref{thm:bemmain}, presented in \cite{bean1979}, is not constructive. Therefore it gives no indication, for any given pattern, what the longest word avoiding that pattern might be. Subsequent to \cite{bean1979}, one constructive unavoidability result was established, pertaining to the subset of patterns that represent permutations. We now discuss this result briefly.

Let $[n]$ denote the set $\{1,2,\dots, n\}$ and let $\Sn$ be the set of all permutations of $[n]$. We use one-line notation to express a permutation $\pi\in\Sn$ -- that is we write $x_1 x_2 \dots x_n$ when $\pi(i) = x_i$ for $i\in[n]$. The write $\pipat$ for the word $12\dots nx_0x_1\dots x_n$, where $x_0$ is a symbol not in $[n]$. Fouch\'e \cite{fou95} discovered the following
\newtheorem{fouche}[thm]{Theorem}
\begin{fouche}
For $n,r\in\N$ there is an $N = N(n,r)\in\N$ such that every $w\in\rN$ reflects every $\pipat$, where $\pi\in\Sn$. 
Specifically, the numbers $N(n,r)$ are inductively bounded from above by 
\begin{equation*}
N(n+1, r+1) \le 2(n+1)N(n+1, r)N(n, (2n+2)^2 r^{N(n+1,r)})
\end{equation*}

\end{fouche}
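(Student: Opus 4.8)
The plan is to prove the bound by a double induction: on the order $n$ of the permutation and, for each fixed $n$, on the alphabet size $r$. For the base of the induction on $n$, note that $S_0$ contains only the empty permutation, whose associated word $\pipat$ is the single letter $x_0$; since every nonempty word reflects $x_0$, I may take $N(0,r)=1$. For the base of the induction on $r$, I would check the unary case directly: over a one-letter alphabet every morphic image of $\pipat$ is a power of the single letter, so $\pipat$ is reflected as soon as the word has length $2n+1$, giving $N(n,1)=2n+1$. These two families of base cases ground every instance reached by the recursion, since the right-hand side only ever decreases $n$ or $r$.

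For the inductive step I fix $\pi\in S_{n+1}$ and a word $w\in[r+1]^{N}$ with $N=2(n+1)LM$, where $L=N(n+1,r)$ and $M=N(n,(2n+2)^2 r^{L})$, both available by induction. I single out the letter $c=r+1$ and split into two cases according to how densely $c$ occurs. The easy case is dealt with first: if $w$ has a factor of length $L$ that omits $c$, then that factor lies in $[r]^{L}$, so by the definition of $L=N(n+1,r)$ it already reflects $\pipat$, whence so does $w$. I may therefore assume that every window of length $L$ meets $c$, so that, writing $w=g_0\,c\,g_1\,c\cdots c\,g_t$ with the $g_i$ the maximal $c$-free blocks, each $|g_i|<L$, and a length count forces $t\ge 2(n+1)M$ occurrences of $c$.

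The heart of the argument is a re-encoding that trades one unit of permutation order for an enlarged alphabet. I would parse the $c$-dense word into the units $h_i=c\,g_i$, each a single $c$ followed by a $c$-free block of length below $L$, and read off a super-word $W=\sigma_1\sigma_2\cdots\sigma_t$ whose letters $\sigma_i$ record the content of $h_i$ together with an auxiliary colour drawn from a set of size $(2n+2)^2$. Since there are fewer than $r^{L}$ possible $c$-free blocks of length below $L$, the super-alphabet has size at most $(2n+2)^2 r^{L}$, and since $t\ge 2(n+1)M\ge M=N(n,(2n+2)^2 r^{L})$, the inductive hypothesis produces a reflection of $\langle\pi'\rangle$ inside $W$, where $\pi'\in S_n$ is the permutation obtained from $\pi$ by deleting the value $n+1$. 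Because distinct occurrences of the same super-letter lift to identical factors of $w$, this reflection lifts to a factor $\hat U_1\cdots\hat U_n\,\hat V\,\hat U_{\pi'(1)}\cdots\hat U_{\pi'(n)}$ of $w$ in which each matched block $\hat U_j$, and each boundary between consecutive units, carries a copy of $c$.

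It then remains to recover the full pattern $\pipat=1\cdots(n+1)\,x_0\,\pi(1)\cdots\pi(n+1)$ from the reflection of $\langle\pi'\rangle$. The idea is to assign the deleted letter $n+1$ the one-letter image $c$ and to realise its two required occurrences---one immediately before $x_0$ and one at the position $k$ with $\pi(k)=n+1$---using the $c$'s that the dense structure makes available at exactly those boundaries. The main obstacle, and the step I expect to absorb most of the work, is to carry out these two insertions \emph{consistently}: the image of each letter must be the same word in the ascending half and in the permuted half, so a borrowed $c$ may not be peeled off a block whose image is reused elsewhere. This is precisely what the auxiliary colour is designed to control, by recording enough positional information---an index in $[2n+2]$ for each of the two halves, whence the factor $(2n+2)^2$---to guarantee that the matched blocks are separated by spare units that can be spent on the separator $x_0$ and on the two copies of $c$ without disturbing any reused image. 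Once this consistency is secured, tracking the lengths through the two cases yields exactly the stated recursion $N(n+1,r+1)\le 2(n+1)N(n+1,r)N(n,(2n+2)^2 r^{N(n+1,r)})$.
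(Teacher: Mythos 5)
Your skeleton---singling out the new letter $c$, the dichotomy on whether some window of length $N(n+1,r)$ omits $c$, the factorisation into $c$-headed blocks, the super-alphabet, and the application of the inductive hypothesis for $n$ to the super-word---is exactly the architecture of this argument (it is also how the paper's own Main Theorem, which generalises this statement, proceeds). But the step you yourself identify as ``the main obstacle'' is genuinely missing, and the device you propose cannot fill it. First, always deleting the value $n+1$ from $\pi$ is wrong: the deleted letter must be \emph{free} for $\langle\pi\rangle$ in the sense of Definition~\ref{thm:freedef}, and the value $n+1$ need not be. Concretely, for $\pi=132$ we have $\langle\pi\rangle=123x_0132$, and the chain of factors $32$, $12$, $13$ witnesses that the letter $3$ is not free: realising the occurrence of $3$ in the permuted half by peeling a $c$ off the image of $2$ forces the image of $1$ to absorb that $c$ in the ascending half (because of the factor $12$), which collides with the factor $13$, where the image of $1$ is immediately followed by the image of $3$. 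So the morphism supplied by the inductive hypothesis need not extend, and the induction breaks down for this choice of deleted value. The correct step is to delete a value of $\pi$ that \emph{is} free for $\langle\pi\rangle$ (for $\pi=132$ the value $1$ works; that such a value always exists itself requires proof), note that the resulting word is isomorphic to $\langle\pi'\rangle$ for some $\pi'\in S_n$, and then invoke the Bean--Ehrenfeucht--McNulty lifting lemma (the paper's Lemma~\ref{thm:morph}): a morphism of the reduced pattern into a word of block form $c^{i_1}X_1c^{i_2}X_2\cdots$ extends to a morphism of the full pattern \emph{because} the deleted letter is free. That lemma, together with the existence of a free letter (cf.\ Lemma~\ref{thm:free}), is precisely how the paper discharges the corresponding step in its Main Theorem; nothing in your write-up supplies this content.

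Second, the ``auxiliary colour'' mechanism is both undefined and aimed at the wrong target. The obstruction above lives in the \emph{pattern}---which letters' images are reused in both halves and in what order---not in the positions of $w$; a colouring of the super-word chosen before the morphism is known cannot enforce the consistency you need, and you never say what the colours are or prove any property of them. Moreover, the factor $(2n+2)^2$ in the bound is not a colour space: it arises as a count of the possible blocks $c^iX$, since $|\langle\pi\rangle|=2n+3$ lets one assume every run of $c$ in $w$ has length at most $2n+2$ (otherwise the constant morphism onto $c$ already reflects $\langle\pi\rangle$), giving at most $(2n+2)\bigl(r+r^2+\cdots+r^{N(n+1,r)-1}\bigr)\le(2n+2)^2r^{N(n+1,r)}$ super-letters. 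So the quantity you spend on colours is in fact needed to index block contents, and the consistency problem must be solved by freeness, not by counting.
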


%Unlike Bean  \cite{bean1979}, Fouch\'e \cite{fou95} gives a constructive proof of the
%unavoidability of factor permutations of words. This includes an upper
%bound $N(n,r)$, where every word of length $N(n,r)$ over $[r]$
%reflects every $\pipat$ with $\pi\in S_n$. 
In the sequel, we show that a similar bound holds for all unavoidable patterns. 
The proof of the Main Theorem \ref{thm:mainres} follows Fouch\'e's reasoning. Subsequent sections are organized as follows: 

In Section \ref{sdensity}, we investigate the density of unavoidable patterns in the space of all patterns. We establish that this density drops quite fast as the length of the pattern increases. This fact then provides a way to calculate an upper bound for the number of unavoidable patterns as function of the size of the underlying alphabet.

Section \ref{sfree} is devoted to the algorithmic decision problem of whether a letter appearing in a given pattern is free. We present a concrete algorithm running in polynomial time. In Section \ref{slogic}, we show that there is a simple reduction from boolean formulas to patterns that maps satisfiable formulas to unavoidable patterns and unsatisfiable formulas to avoidable patterns. The final substantial part of the paper is Section \ref{scomplexity}, where we prove that the the problem of deciding whether a pattern is unavoidable is $NP$-complete.

\section{General Bounds for Unavoidable Patterns}

The main result of this section is Theorem \ref{thm:mainres}, which provides an upper bound on the length of words that can avoid a given, unavoidable pattern. In order to establish Theorem \ref{thm:mainres}, we first need to establish a few facts. Lemma \ref{thm:morph} below gives us a method for building morphisms as the size of our alphabet increases, provided that there is a free letter in the pattern. This lemma, stated here without proof, is proved in \cite{bean1979}.

\newtheorem{morph}[thm]{Lemma}
\begin{morph}\label{thm:morph}
Let A and B be a finite alphabets and let  $w$ be a word over $A$. Suppose $x$ is free for $w$. If there is a morphism $\phi:w^x\mapsto v$, where $v\in B^+$ is of the form $a^{i_1}X_1a^{i_2}X_2\dots a^{i_t}X_ta^{i_{t+1}}$, each $X_i$ being a word over $B\setminus \{a\}$, then there is a morphism $\psi:w\mapsto v$. 
\end{morph}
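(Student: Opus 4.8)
The plan is to keep the target word $v$ fixed and to \emph{reparse} it rather than to rebuild it. The morphism $\phi$ already cuts $v$ into consecutive blocks, one per letter of $w^x$, and to produce $\psi$ I want to carve out, at every occurrence of the deleted letter $x$, one extra nonempty block to serve as $\psi(x)$, without disturbing the underlying word $v$. Concretely, I would first write $w = u_0\, x\, u_1\, x \cdots x\, u_m$, where the $u_j$ are the (possibly empty) maximal $x$-free segments, so that $w^x = u_0 u_1 \cdots u_m$ and $v = \phi(u_0)\phi(u_1)\cdots\phi(u_m)$. Reconstructing $v$ as $\psi(w)$ then amounts to inserting into this factorisation one block $\psi(x)$ at each of the $m$ cut points now occupied by an $x$. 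Since a genuinely new block would lengthen the word, these blocks must instead be \emph{borrowed}: I would set $\psi(x) = a^k$ for a suitable $k \ge 1$ and obtain $\psi(c)$, for each other letter $c$, from $\phi(c)$ by sliding its terminal boundaries through the adjacent runs of the letter $a$.

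The form $v = a^{i_1}X_1 a^{i_2}X_2 \cdots a^{i_t}X_t a^{i_{t+1}}$ is exactly what makes this borrowing both possible and harmless. Because each $X_j$ is free of the letter $a$, the only slack in any parse of $v$ lies \emph{inside} the $a$-runs: a cut point may be slid left or right through such a run without changing which non-$a$ letters are assigned to which factor. The decisive consequence is that the material a letter cedes to, or takes from, a neighbour across such a boundary is a pure block $a^{\,\cdot}$, hence \emph{independent of which neighbour it is}; this is what will let $\psi(c)$ be well defined even when $c$ occurs in several different contexts in $w$. Nonemptiness of every image, including $\psi(x)$, is the second place the block form is used: the runs $a^{i_j}$ guarantee a supply of $a$'s at precisely the boundaries that have to be opened up.

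The crux — and the only point where the hypothesis that $x$ is \emph{free} enters — is showing that the boundary slides can be chosen \emph{coherently}. Morphism consistency forces, across every $2$-factor $cd$ of $w$, the amount $c$ cedes on its right to equal the amount $d$ receives on its left; propagating this equality through all $2$-factors partitions the letter-ends into classes on which the shift must be constant. To give $\psi(x)$ positive length one needs the left end and the right end of $x$ to lie in \emph{different} classes, so that their shifts may be chosen to differ. An alternating chain of factors $x a_1,\, b_1 a_1,\, b_1 a_2,\, \ldots,\, b_n x$ as in Definition~\ref{thm:freedef} is precisely a path that would collapse the two ends of $x$ into the same class and thereby force $\psi(x)$ empty; the assertion that $x$ is free is exactly the statement that no such chain exists. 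I would therefore translate Definition~\ref{thm:freedef} into this non-existence of a conflicting cycle among the boundary shifts, conclude that the two ends of $x$ lie in distinct classes, fix the shifts so that $\psi(x) = a^k$ with $k \ge 1$, and finish with a direct check that $\psi(w) = v$. I expect this coherence argument, together with the separate treatment of any $x$ sitting at the very start or end of $w$, to be the main obstacle; the remaining verifications are bookkeeping on the $a$-runs of $v$.
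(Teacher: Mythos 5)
First, a point of orientation: the paper does not prove Lemma~\ref{thm:morph} at all --- it is stated without proof and attributed to \cite{bean1979} --- so your attempt can only be measured against the classical Bean--Ehrenfeucht--McNulty argument. Your skeleton is essentially that argument, and its heart is correct: parse $v$ according to $\phi$, allow parse boundaries to slide only through runs of $a$, observe that every $2$-factor $cd$ of $w$ forces the shift at $c$'s right end to equal the shift at $d$'s left end, propagate these equalities into classes of letter-ends, and use freeness of $x$ (no alternating chain, i.e.\ no path joining the two ends of $x$, which is exactly the graph criterion of Section~\ref{sfree}) to choose the two shifts at $x$'s ends differing by some $k\ge 1$, so that $\psi(x)=a^k$.

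The genuine gap is your claim that ``the runs $a^{i_j}$ guarantee a supply of $a$'s at precisely the boundaries that have to be opened up.'' This does not follow from the hypotheses as stated. The boundary to be opened is the seam between $\phi(c)$ and $\phi(d)$ for a factor $cxd$ of $w$, and nothing prevents $\phi(c)$ from ending, and $\phi(d)$ from beginning, with letters other than $a$; then that seam lies strictly inside some $a$-free block $X_j$, a boundary allowed to move only through $a$'s cannot move at all, the classes of $c$'s right end and of $d$'s left end are pinned to the same position, and $\psi(x)$ is forced to be empty. Concretely, $w=cxdgcehfd$ has $x$ free, and taking $\phi(c)=ab$, $\phi(d)=ba$ and $\phi(g)=\phi(e)=\phi(h)=\phi(f)=a$ gives $v=abbaaabaaaba$, which has the required form with $X_1=bb$, yet the $cxd$ seam sits inside $X_1$; a morphism $\psi$ does exist here (e.g.\ $\psi(c)=\psi(d)=a$, $\psi(x)=bb$), but it is not obtained from $\phi$ by any sliding, so your construction halts on a legitimate instance. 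What rescues the lemma in the only way it is used (in the proof of the Main Theorem~\ref{thm:mainres}) is an implicit structural hypothesis your proof must import: there $\phi$ arises as a morphism into the block alphabet $B$ whose letters are the words $a^iX$, so every image $\phi(c)$ is a nonempty concatenation of such blocks --- in particular each $\phi(c)$ begins with $a$ and has length at least $2$. Granting this, every seam created by deleting $x$ has $a$'s immediately to its right; one may take $\psi(x)=a$, assign shift $1$ to the class of $x$'s right end and $0$ to every other class, and well-definedness, nonemptiness of all images, and the endpoint cases all follow. Without isolating this block-alignment property, the proposed proof does not go through.
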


Since every letter in a free set is free, the following Lemma follows immediately from Theorem \ref{thm:bembetter}. This will be used in conjunction with Lemma \ref{thm:morph} to build morphisms in the proof of the main result below.

\newtheorem{free}[thm]{Lemma}
\begin{free}\label{thm:free}
Every unavoidable pattern has a free letter.
\end{free}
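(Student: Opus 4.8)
The plan is to read off the result directly from the free-set reformulation in Theorem \ref{thm:bembetter}, after making precise the passing remark that every letter of a free set is itself a free letter. The heavy lifting is already done by that equivalence, so the argument is essentially a specialization from pairs of letters to single letters.

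First I would fix an unavoidable pattern $\alpha$, which by definition is a nonempty word. By Theorem \ref{thm:bembetter} there is a finite sequence of free-set deletions reducing $\alpha$ to the empty word. Deleting a free set none of whose letters occur in the current word leaves that word unchanged, so I would discard such trivial steps and focus on the first deletion that actually alters the word. Since $\alpha$ is nonempty and the sequence terminates in the empty word, such a step must exist, and immediately before it the current word is still $\alpha$. Hence there is a set $X$, free for $\alpha$, containing at least one letter $x$ that occurs in $\alpha$.

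The second step is the specialization itself. Applying Definition \ref{thm:freesetdef} to the pair $(x,x)$, that is, taking $y = x$, asserts precisely that there is no $n > 0$ and no $a_1,\dots,a_n,b_1,\dots,b_n$ for which
\[ x a_1,\ b_1 a_1,\ b_1 a_2,\ b_2 a_2,\ \dots,\ b_n x \]
are all factors of $\alpha$. This is exactly the condition in Definition \ref{thm:freedef} for $x$ to be free, and since $x$ occurs in $\alpha$ by construction, $x$ is a free letter of $\alpha$, which is what we wanted.

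The only point requiring care is the bookkeeping about occurrence: a free set may in principle contain letters absent from $\alpha$, so I cannot invoke an arbitrary member of $X$ but must select one that genuinely occurs in $\alpha$. This is guaranteed by choosing the first nontrivial deletion in the reduction, at which stage the word is still $\alpha$. I expect no substantive obstacle beyond this technicality, since the equivalence in Theorem \ref{thm:bembetter} furnishes the free set and Definition \ref{thm:freesetdef} collapses to Definition \ref{thm:freedef} on the diagonal $y=x$.
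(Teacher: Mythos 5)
Your proof is correct and takes essentially the same route as the paper, which disposes of the lemma in one line by observing that every letter in a free set is free (your diagonal specialization $y=x$ of Definition \ref{thm:freesetdef}) and then invoking Theorem \ref{thm:bembetter}. Your extra bookkeeping—selecting the first deletion that actually alters $\alpha$ so that the chosen letter genuinely occurs in it—is a sensible tightening of the same argument rather than a different approach.
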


We are now ready to prove our main result. The construction of the proof closely follows \cite{fou95}.

\newtheorem{mainres}[thm]{Main Theorem}
\begin{mainres}\label{thm:mainres}
For $n,r\in\N$ there is an $N = N(n,r)\in\N$ such that every $w\in\rN$ reflects every unavoidable pattern of length $n$ over $[r]$. The minimal values for the numbers $N(n,r)$ are bounded from above by 
\begin{equation*}
N(n+1, r+1) \le (n+1)N(n+1, r)N(n, (n+1)^2 r^{N(n+1,r)})
\end{equation*}
\end{mainres}

\begin{proof}
It is easy to see that $N(1,r) = r+1$ and $N(1,n) = n+1$. From here we proceed by induction to establish the stated bound. Suppose our result holds for some $n$ and all $r$, as well as for $n+1$ and some $r \ge 1$. 

Let $w$ be a word of length $(n+1)KL$ over an alphabet $A$ of size $r+1$, where $K=N(n+1, r)$ and $L=N(n, (n+1)^2r^{N(n+1, r)}$. We may assume that every factor of length $K$ in $w$ contains every letter in $A$, for otherwise $w$ reflects every unavoidable pattern of length $n+1$, by our inductive hypothesis. Consequently, the word $w$ is of the form $a^{i_1}X_1a^{i_2}X_2\dots a^{i_t}X_ta^{i_{t+1}}$, where each $X_i \in \{A\setminus \{a\}\}^{+}$ satisfies $|X_i| < K$. We may assume that  $1\le a_{i_j} \le n$, for otherwise the morphism $f(x) = a$ that sends every letter to $a$ shows that every pattern of length $n+1$ is reflected by $w$. 

We immediately have 
\begin{align*}
(n+1)KL = |w| &\le (K-1)t + (t+1)(n+1)\\
&= (K+n)t +n +1\\
&\le (n+1)Kt +1\\
\end{align*} 
since $K > 2$ is readily available from the definition of K. Therefore we have $t > L$ and hence $w$ has a factor $v = a^{i_1}X_1a^{i_2}X_2\dots a^{i_L}X_L^{i_{L+1}}$, where each $X_i \in \{A\setminus \{a\}\}^{+}$ satisfies $|X_i| < K$.

Define the alphabet $B$ as the set of words of the form $a^iX$, with $1\le i\le n$ and  $X \in \{A\setminus \{a\}\}^{+}$ satisfies $|X_i| < K$. 

\begin{align*}
 |B| =& n(r+r^2+\dots+r^{K-1})\\
 \le& (n+1)^2r^K\\
\end{align*}
since $K \ge n + 1$ for every $n$ and $r$.

We have $v$ is a word of length $L$ over $B$. Suppose that $\alpha$ is any unavoidable pattern of length $n+1$ over $A$. Using Lemma \ref{thm:free} there is a letter $x\in A$ that is free for $\alpha$. We remind ourselves that $L=N(n, (n+1)^2r^{N(n+1, r)})$ and note, by our inductive hypothesis, that there is thus a morphism $\phi:\alpha^x\mapsto v$. Consequently, Lemma \ref{thm:morph} yields that there is a morphism $\psi:\alpha\mapsto v$ and the proof is complete.
\end{proof}

%%%%%%%%%%%%%%%%%%%%%
%
% End constructive general bounds
%
%%%%%%%%%%%%%%%%%%%%%

%%%%%%%%%%%%%%%%%%%%%
%
% Begin Density
%
%%%%%%%%%%%%%%%%%%%%%
%\iffalse

\section{Density and Counting Unavoidable Patterns}\label{sdensity}
A natural subsequent question is how many unavoidable words there are. We start by showing that, for alphabets of $3$ or more letters, the fraction of words that are unavoidable drops exponentially fast in the length of the word.

\newtheorem{avdense}[thm]{Lemma}
\begin{avdense}\label{thm:avdense}
Let $r > 2$ and $n > 0$. Let $p_{r,n}$ be the probability that a pattern of length n is unavoidable over [r]. We have $p_{r,n} \le \left(\frac{r-1}{r}\right)^{n-1}$.
\end{avdense}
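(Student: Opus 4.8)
The plan is to convert the probabilistic statement into a counting bound and then isolate one cheaply countable necessary condition for unavoidability. Writing $U_{r,n}$ for the number of unavoidable patterns of length $n$ over $[r]$, we have $p_{r,n} = U_{r,n}/r^n$, so it suffices to prove that $U_{r,n} \le r(r-1)^{n-1}$.

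The heart of the argument is the necessary condition \emph{if $\alpha = a_1 a_2 \cdots a_n$ is unavoidable, then $a_i \ne a_{i+1}$ for every $i$}. I would establish this by contraposition. Suppose $a_i = a_{i+1} = c$ for some $i$, so that the square $cc$ occurs as a factor of $\alpha$. By Thue's result, recalled in the introduction, the pattern $xx$ is avoidable; fix an infinite square-free word $s$ over a three-letter alphabet and let $X$ be the (infinite) set of its prefixes, each of which is again square-free. No $w \in X$ reflects $\alpha$: if some morphism $\phi$ made $\phi(\alpha)$ a factor of $w$, then $\phi(a_i)\phi(a_{i+1}) = \phi(c)\phi(c)$ would be a factor of $\phi(\alpha)$, hence of $w$, and since $\phi(c)$ is a nonempty word this is a genuine square inside the square-free $w$, a contradiction. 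Thus infinitely many members of $X$ fail to reflect $\alpha$, so $\alpha$ is avoidable.

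With the necessary condition in hand the remainder is routine. The number of words $a_1 \cdots a_n \in [r]^n$ with $a_i \ne a_{i+1}$ for all $i$ is exactly $r(r-1)^{n-1}$, there being $r$ choices for $a_1$ and $r-1$ choices for each subsequent letter. Since every unavoidable pattern lies in this set, $U_{r,n} \le r(r-1)^{n-1}$, and therefore
\[
p_{r,n} = \frac{U_{r,n}}{r^n} \le \frac{r(r-1)^{n-1}}{r^n} = \left(\frac{r-1}{r}\right)^{n-1},
\]
which is the claimed bound.

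I expect the only real obstacle to be the necessary condition, and inside it the small general principle that a pattern containing an avoidable factor is itself avoidable; everything downstream is bookkeeping. It is worth remarking that the hypothesis $r > 2$ plays no role in this particular argument, since the avoidability of $xx$ is realized over the three-letter \emph{word} alphabet, independently of the pattern alphabet $[r]$. The same bound therefore also holds when $r = 2$, though it is far from tight there: the pattern $xyxy$ has no immediate repetition yet is avoidable, so the count $r(r-1)^{n-1}$ badly overestimates $U_{2,n}$ for large $n$.
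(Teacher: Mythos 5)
Your proof is correct, and its skeleton coincides with the paper's: both isolate the necessary condition that an unavoidable pattern contains no factor $cc$, count the words without immediate repetitions as $r(r-1)^{n-1}$, and divide by $r^n$. The genuine difference lies in how that necessary condition is established. The paper proves it syntactically, via the B.E.M. characterization: a letter $x$ is never free in a word containing $xx$, so no sequence of free-letter deletions (and, though the paper glosses this point, no identification of letters either, since identifications map squares of letters to squares of letters) can reduce such a word to the empty word, whence Theorem \ref{thm:bemmain} yields avoidability. You instead argue semantically, straight from the definition of avoidability: the prefixes of Thue's infinite square-free word over three letters form an infinite set, none of whose members can reflect a pattern containing $cc$, because $\phi(c)\phi(c)$ would be a nonempty square inside a square-free word. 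Each route buys something. Yours needs nothing beyond Thue's result and the raw definition --- strictly, the cited avoidability of $xx$ already supplies infinitely many square-free words over a finite alphabet, which is all your argument requires, so the infinite word is just a convenience --- and it exhibits an explicit three-letter witness set avoiding every such pattern. The paper's route is purely combinatorial on the pattern side: despite announcing that it will use Thue's theorem, its free-letter invariant argument never actually does, relying only on the definition of freeness and the B.E.M. theorem. Your closing remarks are also correct: the hypothesis $r>2$ plays no role in this lemma under either argument (it is needed only later, in Proposition \ref{thm:avupper}, where $r-2$ appears in a denominator), and the bound is indeed far from tight for $r=2$ since, e.g., $xyxy$ is repetition-free yet avoidable.
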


\begin{proof}
Let $w$ be a word of length $n$ over $r$. If $n=1$ then $w$ is unavoidable, so that our claim holds with $p_{r,1} = (\frac{r-1}{r})^0 = 1$. Now suppose $n > 1$. We will use the fact that $xx$ is avoidable, established in \cite{thue1906}. Let $V = \{w\in[r]^n :  x\in[r] \mbox{ and } xx \mbox{ is a factor of } w\}$. First we claim that every element of $V$ is avoidable. To prove our claim, we start by noting that $x$ is not free for any $v\in[r]^*$ that has $xx$ as a factor. Hence any sequence of deletions of free letters applied to $w$ results in a word that has $xx$ as a factor. Using Theorem \ref{thm:bemmain}, our claim is proved. Let $U_{n,r}$ be the set of all unavoidable words of length $n$ over $r$. By our claim above, we have $U \subseteq \bar{V} = [r]^n\setminus V$. Now we count the elements of $\bar{V}$. Let $w = w_1w_2\dots w_n$ be an abstract word of length $n$ over $r$. For $w_1$ we can choose any one of the $r$ letters in $[r]$. For each subsequent $w_i$, we can choose any letter from $[r]$, other than our choice of $w_{i-1}$. Hence $|\bar{V}| = r(r-1)^{n-1}$. It follows that $|U|\le r(r-1)^{n-1}$ and therefore $p_{r,n} \le \frac{r(r-1)^{n-1}}{r^n} = \left(\frac{r-1}{r}\right)^{n-1}$. 
\end{proof}

We also know from \cite{bean1979} that all unavoidable patterns over $[r]$ have length less than $2^n$. Combined with Lemma \ref{thm:avdense} above, we can now obtain an upper bound on the number of unavoidable patterns over $[r]$, where $r > 2$. 

\newtheorem{avupper}[thm]{Proposition}
\begin{avupper}\label{thm:avupper}
Let $r > 2$. The number of unavoidable patters over $[r]$ is at most $r\left(\frac{(r-1)^{2^r-1}-1}{r-2}\right)$.
\end{avupper}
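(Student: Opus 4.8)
The plan is to bound the number of unavoidable patterns of each fixed length and then sum over the finitely many admissible lengths. Both ingredients are already available: Lemma~\ref{thm:avdense} controls how many patterns of a given length can be unavoidable, while the bound of Bean et al.\ \cite{bean1979} quoted above tells us that no unavoidable pattern over $[r]$ is longer than $2^r - 1$.

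First I would extract an integer count from Lemma~\ref{thm:avdense}. Since there are exactly $r^n$ patterns of length $n$ over $[r]$ and a fraction at most $p_{r,n}$ of them is unavoidable, the number of unavoidable patterns of length $n$ is at most
\[
p_{r,n}\,r^n \le \left(\frac{r-1}{r}\right)^{n-1} r^n = r(r-1)^{n-1}.
\]
This is precisely the count $|\bar{V}| = r(r-1)^{n-1}$ produced in the proof of Lemma~\ref{thm:avdense}, so no new work is needed to obtain it.

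Next I would invoke the result from \cite{bean1979} that every unavoidable pattern over $[r]$ has length strictly less than $2^r$. Hence the only lengths contributing to the total are $n = 1, 2, \dots, 2^r - 1$, and summing the per-length bound yields a finite geometric series:
\[
\sum_{n=1}^{2^r-1} r(r-1)^{n-1} = r\sum_{k=0}^{2^r-2}(r-1)^k = r\cdot\frac{(r-1)^{2^r-1}-1}{(r-1)-1} = r\left(\frac{(r-1)^{2^r-1}-1}{r-2}\right),
\]
which is exactly the claimed bound. The hypothesis $r > 2$ guarantees $r-1 > 1$ and $r - 2 \ge 1$, so the common ratio exceeds $1$ and the closed form for the geometric sum is valid with no division by zero.

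There is no serious obstacle here; the proof is a routine assembly of two earlier facts. The only points requiring care are translating the probabilistic statement of Lemma~\ref{thm:avdense} into an exact integer count of unavoidable words of each length, and pinning down the summation range so that the top exponent emerges as $2^r - 1$ rather than $2^r$ or $2^r - 2$ — which is exactly what makes the geometric-series closed form coincide with the stated expression.
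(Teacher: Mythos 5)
Your proof is correct and follows essentially the same route as the paper: both take the per-length bound $r(r-1)^{n-1}$ from Lemma~\ref{thm:avdense}, cap the lengths at $2^r-1$ via the Bean--Ehrenfeucht--McNulty length bound, and close the resulting geometric series. Your write-up is in fact slightly more careful than the paper's, since you note explicitly that $r>2$ is what validates the closed form and you correct the paper's typographical slip of writing $2^n$ where $2^r$ is meant.
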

\begin{proof}
 The number of unavoidable patterns of length $n$ is bounded from above by 
$$
p_{r,n}r^n \le \left(\frac{r-1}{r}\right)^{n-1}r^n = r(r-1)^{n-1}.
$$
Since there are no unavoidable patterns of length greater than $2^n-1$ we have the total number of unavoidable patterns is at most 
$$
\sum_{i=1}^{2^r-1}r(r-1)^{i-1} = r \sum_{i=0}^{2^r-2}(r-1)^{i} = r\left(\frac{(r-1)^{2^r-1}-1}{r-2}\right)
$$
and the proof is complete.
\end{proof}

%%%%%%%%%%%%%%%%%%%%%
%
% End Density
%
%%%%%%%%%%%%%%%%%%%%%

%\fi

%%%%%%%%%%%%%%%%%%%%%%
%
% Section Computation
%
%%%%%%%%%%%%%%%%%%%%%%

\section{Free Letters and Computation}\label{sfree}

% Begin Computational definitions and concepts

We now proceed to investigate the computational aspects of unavoidability, 
assuming a basic familiarity with algorithms and
computational complexity, for which
Hopcroft and Ullman \cite{hop79}  and \cite{cor89} provide authoritative references. The computational complexity of patterns has been the subject of significant study. Rytter and Shur \cite{ryt2014} demonstrated that the problem of finding whether a pattern is reflected in a given string is $NP$-complete. In the same article, they mention that the problem of determining whether a pattern is unavoidable has, at face value, properties that many other $NP$-complete problems have. Below, we show that their suspicions are correct. The complexity of unavoidable words in the sense of substrings, not morphisms, has also been investigated \cite{sadri2009}.

% End Computational definitions and concepts

For a pattern $\al$ we construct a directed bipartite graph $\Ga$, which we call the {\sl graph of $\al$}. The vertex set $V(\Ga)$ of $\Ga$ has two nodes $^0ab$ and $^1ab$ for each 2-factor $ab$ of $\al$. The pair of 2-factors $(^0ab, ^1cd)$ of $\al$ is an edge of $\Ga$ whenever $b = d$. Similarly, the pair $(^1ab, ^0cd)$ of $\al$ is an edge of $\Ga$ whenever $a = c$. The reason why we create two vertices for each 2-factor is to prevent paths of the form $xa, xb, xc$.

\newtheorem{gfree}[thm]{Lemma}
\begin{gfree}\label{thm:gfree}
Let $\al$ be a pattern. A letter $x$ of $\al$ is not free if and only if there is a path in $\Ga$ from a node having $x$ as its first component to a node having $x$ as its second component.
\end{gfree}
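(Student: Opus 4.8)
The plan is to exhibit a tight correspondence between the witness chains of Definition~\ref{thm:freedef} and directed paths in $\Ga$, and then to read both implications off this correspondence. The structural fact I would isolate first is that $\Ga$ is bipartite with every edge running from a $0$-node to a $1$-node (when the two \emph{second} components agree) or from a $1$-node to a $0$-node (when the two \emph{first} components agree); its vertices are exactly the $2$-factors of $\alpha$, each taken in two copies. Consequently every directed path \emph{strictly alternates} the two edge types: a second-component match is always followed by a first-component match, and vice versa. This is precisely the alternation displayed by the factors $xa_1,\,b_1a_1,\,b_1a_2,\,b_2a_2,\,\dots,\,b_nx$, and it is exactly what the duplication of each $2$-factor into a $0$-node and a $1$-node enforces: it forbids illegitimate paths such as $xa,xb,xc$ that would match on the first component twice in succession.

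For the direction ``$x$ not free $\Rightarrow$ path'', I would take a witness chain as in Definition~\ref{thm:freedef} with $n>0$ and lift it to a walk in $\Ga$, labelling its factors alternately as $0$- and $1$-nodes starting from ${}^{0}xa_1$. The definition guarantees that consecutive factors share their second component exactly on the $0\to1$ steps and their first component exactly on the $1\to0$ steps, so each consecutive pair is genuinely an edge of $\Ga$. The initial vertex ${}^{0}xa_1$ carries $x$ as its first component and the terminal vertex ${}^{0}b_nx$ carries $x$ as its second component, which is the required path (passing to a simple sub-path with the same endpoints if ``path'' is meant to be simple).

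For the converse I would begin with any directed path $P$ whose first vertex has first component $x$ and whose last vertex has second component $x$, and recover a witness. If $P$ is a single vertex, that vertex carries $x$ in both components, so $xx$ is a $2$-factor of $\alpha$ and the choice $a_1=b_1=x$ already gives an $n=1$ witness. Otherwise I use that $1\to0$ edges preserve the first component and $0\to1$ edges preserve the second component to trim $P$ down to $0$-node endpoints: if $P$ starts at a $1$-node I drop that vertex, its successor being a $0$-node that still carries first component $x$, and symmetrically at the terminal end. The trimmed path runs between two $0$-nodes, hence has even length $2n$; stripping the $0/1$ labels from its vertices reads off exactly the chain $xa_1,b_1a_1,\dots,b_nx$ of factors of $\alpha$, certifying that $x$ is not free (and if trimming collapses $P$ to a single $0$-node we are again in the $xx$ case with $n=1$).

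I expect the only real obstacle to be bookkeeping rather than a genuine difficulty: one must verify that the $0/1$ labelling forces the strict alternation in both readings, and must dispatch the boundary situations cleanly---paths that terminate at a $1$-node, and the trivial $xx$ path---so that the recovered parameter always satisfies $n>0$, as Definition~\ref{thm:freedef} demands. It is also worth recording that $x$ is taken to be a letter occurring in $\alpha$, so that the vertices carrying $x$ exist and the statement is not vacuous.
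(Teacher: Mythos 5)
Your proof is correct and takes essentially the same approach as the paper: lift the witness chain of Definition~\ref{thm:freedef} to an alternating $0/1$-labelled path in $\Ga$, and read the construction backwards for the converse. You are in fact more careful than the paper, which dispatches the converse in one sentence and glosses over the degenerate cases you handle (trimming to $0$-node endpoints, the trivial $xx$ path, passing to a simple path); note also that your terminal vertex ${}^0b_nx$ is the correct one, the paper's displayed edge list ending at ${}^1b_nx$ being a small indexing slip.
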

\begin{proof}
If $x$ is not free for $\alpha$, then there is an $n\in\N$ and $a_1, a_2,\dots,a_n,b_1,b_2,\dots,b_n$ such that 
 \begin{align*}
 xa&_1\\
 b_1&a_1\\
 b_1&a_2\\
 b_2&a_2\\
 &\vdots\\
 b_n&x\\
 \end{align*}
 are all factors of $w$. It is clear from the definition of $\Ga$ that the edges 
  \begin{align*}
 (^0xa_1, ^1&b_1a_1)\\
 (^1b_1a_1&, ^0b_1a_2) \\
 (^0b_1a_2&, ^1b_2a_2)\\
 (^1b_2a_2&, ^0b_2a_3)\\
 &\vdots\\
 (^0b_na&_{n-1},^1b_nx)\\
 \end{align*}
 all exist in $\Ga$. Therefore a path from $^0xa_1$ to $^1b_nx$ exists in $\Ga$, as desired.

%Now suppose there is a path in $\Ga$ from $^0xa_1$ to $^1b_nx$. 
Proving the converse is essentially the same as reading the construction above in reverse.
\end{proof}

Given Lemma \ref{thm:gfree}, we can easily construct an efficient algorithm that decides, given a pattern $\al$ and a letter $x$ appearing in $\al$, whether $x$ is free for $\al$. 

Firstly, the construction of the adjacency matrix of $\Ga$ from $\al$ can be defined as follows:

%requires $O(|\al|log |\al|)$ computational steps: On input $\al$, for every 2-factor $ab$, add $^0ab$ to $V_0$ and add $^1ab$ to $V_1$. This takes $O(|\al|)$ steps. Sort the sets $V_0$ and $V_1$ by the second letter in each 2-factor. This takes $O(|\al|log |\al|)$ computational steps (QSORT REF)
\hfill\break
\par\noindent\rule{\textwidth}{0.4pt}
\begin{verbatim}
def BUILD_G(n, alpha):
    G = [[0 for x in range(2*n-2)] for y in range(2*n-2)]
    V =  [[[0 for x in range(2)] for y in range(2)] for z in range(n-1)]
    for i in range(n-1):
        V[i][0][0] = V[i][1][0] = alpha[i]
        V[i][0][1] = V[i][1][1] = alpha[i+1]
    for i in range(n-1):
        for j in range(n-1):
           if V[i][0][0] == V[j][1][0]:
               G[i + n - 1][j] = 1
           if V[i][0][1] == V[j][1][1]:
               G[i][j + n - 1] = 1
    return (V, G)
\end{verbatim}
\par\noindent\rule{\textwidth}{0.4pt}
\hfill\break

 We notice that the runtime is dominated by the nested for loop and therefore requires $O(n^2)$ computational steps. The subroutine as it is written is not quite optimal since multiple vertices are created if the same 2-factor is repeated. This impacts the time complexity only to a multiplicative constant and simplifies the description.

Now, given a graph $G = \Ga$ and a letter $x$, we can use a standard depth-first search algorithm to detect if $x$ is not free.
For simplicity we write a standard depth-first search subroutine. 
\hfill\break
\par\noindent\rule{\textwidth}{0.4pt}
\begin{verbatim}
def DFS(n, G, V, i, p, x, is_seen):
    is_seen[i][p] = True
    if p == 0:
        q = 1
    else:
        q = 0
    for j in range(n-1):
        if not is_seen[j][q] and 
          (
          (q == 0 and G[i + n - 1][j] == 1) or
          (q == 1 and G[i][j  + n - 1]  == 1)
           ):
            if V[j][q][1] == x:
                return True
            else:
                if DFS(n, G, V, j, q, x, is_seen):
                   return True
    return False
\end{verbatim}
\par\noindent\rule{\textwidth}{0.4pt}
\hfill\break

We are now ready to write the subroutine determining if $x$ is free for $\al$.
\hfill\break
\par\noindent\rule{\textwidth}{0.4pt}
\begin{verbatim}
def IS_FREE(alpha, x):
    if x not in alpha:
       return False
    n = len(alpha)
    V,G = BUILD_G(n, alpha)
    is_seen = [[False for i in range(n)] for j in range(n)]
    for i in range(n-1):
        is_seen = [[False for k in range(n)] for j in range(n)]    
        if V[i][0][0] == x:
            if not is_seen[i][0]:
                if DFS(n, G, V, i, 0, x, is_seen):
                    return(False)
    return(True)
\end{verbatim}
\par\noindent\rule{\textwidth}{0.4pt}
\hfill\break
The subroutine {\tt IS\_FREE} requires $O(n^2)$ computational steps, where $n = |\al|$: We already know that {\tt BUILD\_G} is $O(n^2)$. In subsequent steps, {\tt DFS} is called at most $n$ times since every vertex is marked as seen subsequent to the invocation of {\tt DFS}. At every invocation of {\tt DFS}, at most $n$ neighbors of a vertex are examined.

Let us pause for a moment to remember where we started and what we have seen along the way. Our initial definition of unavoidability sounds distinctly non-finitary: A pattern must be reflected by all but finitely many elements for every set over any finite alphabet. Theorem \ref{thm:bembetter} then gives us a finitary characterization of unavoidability in that we only need to look for a sequence of deletions of free letters. Most recently we have seen, in addition, that the problem of deciding whether a letter is free falls in Polynomial Time. It is hence starting to look as though the problem of determining whether a pattern is free might fall in $NP$: We can nondeterministically guess the sequence of deletions and verify the validity of the guess (each deletion being of a free  letter) in polynomial time. We may also ask how hard this problem is, relative to other problems in $NP$. In the following two sections, we explore this. 

%Logic
\section{Unavoidability and Logic}\label{slogic}
We work to establish a natural correspondence between boolean formulas and patterns. In particular, we show that given a boolean formula, we can construct a word whose unavoidability coincides with the satisfiability of of the formula. We will restrict our construction to 3-CNF boolean formulas, as the correspondence between this subset of boolean formulas and the set of all boolean formulas is well-understood (see \cite{hop79}).

Let $\phi$ be any 3-CNF boolean formula. We construct $\alp$, the {\sl word of $\phi$}, as follows:
Suppose $\phi$ has $n$ variables $x_1, x_2, \dots, x_n$. Without loss of generality  $\phi = C_1 \land C_2 \land \dots\land C_m$, where each $C_i$ is a clause of the form $(p_{i1}\lor p_{i2}\lor p_{i3})$, each $p_{ij}$ being either a variable $x_k$, or its negation $\ov{x}_k$. We may also assume that any negated variables occur after any non-negated variables in each clause. We start by defining the letters in $\alp$. These letters will fall into the following four categories:

\begin{enumerate}
\item The set $X_{\alp} = \{x_i,\ov{x}_i,: i \le n\} $
\item The set  $Y_{\alp} = \{a_j, b_j, c_j,d_j : j < m\}$
\item The letter $e$
\item The set $Z = \{z_{i}: i \le M\}$. We choose $M$ to be sufficiently large so that every element of this set will appear exactly once in $\al$.
\end{enumerate}

The elements of $Z$ above are used as``separator" letters to prevent unfortunate 2-factors from occurring. We adopt the convention that we will use each letter in $Z$ once and denote each occurrence of a letter from $Z$ in $\alp$ by $\zp$.  We denote the union of the sets of letters itemized above by $\A$.

For each variable $x_i$, we create the factor
\begin{align*}
ex_i \ov{x}_ie\zp\\
\end{align*}
%where $t_i$ is the number of occurrences of $x_i$ in a literal of $\phi$.

For each clause $C_j$ in $\phi$ we construct a factor $\delta_j$ as the concatenation of the following factors.

%+1
Let $x, y$ and $z$ be variables in $\phi$.
If $C_j$ is of the form $x\lor y\lor z$ we add the following factors to $\al$:

\begin{align*}
a_jx\zp\\
b_jx\zp\\
b_jy\zp\\
c_jy\zp\\
c_jz\zp\\
d_jz\zp\\
d_ja_j\zp\\
\\
a_jb_j\zp\\
a_jc_j\zp\\
a_jd_j\zp\\
\\
a_je\zp
\end{align*}

If $C_j$ is of the form $x\lor y\lor \ov{z}$ we add the following factors to $\al$:

\begin{align*}
a_jx\zp\\
b_jx\zp\\
b_jy\zp\\
c_jy\zp\\
c_jd_j\zp\\
\ov{z}d_j\zp\\
\ov{z}a_j\zp\\
\\
a_jb_j\zp\\
a_jc_j\zp\\
d_ja_j\zp\\
\\
a_je\zp
\end{align*}

If $C_j$ is of the form $x\lor \ov{y}\lor \ov{z}$ we add the following factors to $\al$:

\begin{align*}
a_jx\zp\\
b_jx\zp\\
b_jc_j\zp\\
\ov{y}c_j\zp\\
\ov{y}d_j\zp\\
\ov{z}d_j\zp\\
\ov{z}a_j\zp\\
\\
a_jb_j\zp\\
c_ja_j\zp\\
d_ja_j\zp\\
\\
a_je\zp
\end{align*}

If $C_j$ is of the form $\ov{x}\lor \ov{y}\lor \ov{z}$ we add the following factors to $\al$:

\begin{align*}
a_jb_j\zp\\
\ov{x}b_j\zp\\
\ov{x}c_j\zp\\
\ov{y}c_j\zp\\
\ov{y}d_j\zp\\
\ov{z}d_j\zp\\
\ov{z}a_j\zp\\
\\
b_ja_j\zp\\
c_ja_j\zp\\
d_ja_j\zp\\
\\
ea_j\zp
\end{align*}

We define the word $\alp$ of $\phi$ as the culmination of the above construction and proceed to prove some properties of $\alp$.

\newtheorem{ynotfree}[thm]{Lemma}
\begin{ynotfree}\label{thm:ynotfree}
Let $\phi = C_1 \land C_2 \land \dots\land C_m$ be a 3-CNF boolean formula. Let $B\subset \A\setminus \{a_j,b_j,c_j,d_j\}$ be such that, if $p_i$ is a literal in $C_j$, then the letter $p_i$ is not in $B$.
No  letter in $\{a_j, b_j, c_j,d_j\}$ is free for $\alp^B$.
\end{ynotfree}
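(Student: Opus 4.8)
The plan is to verify the failure of freeness directly from Definition \ref{thm:freedef}: for each letter $\ell\in\{a_j,b_j,c_j,d_j\}$ I will exhibit an explicit chain of $2$-factors of $\alp^{B}$ of the alternating shape required there (equivalently, by Lemma \ref{thm:gfree}, a directed path in the graph $G_{\alp^{B}}$ from a node carrying $\ell$ as first component to one carrying $\ell$ as second component). Since all four letters $a_j,b_j,c_j,d_j$ lie outside $B$, they occur in $\alp^{B}$, so the only content of the lemma is the existence of these chains.

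First I would record which $2$-factors of the gadget $\delta_j$ survive the passage from $\alp$ to $\alp^{B}$. Every ``meaningful'' pair in the construction (for instance $a_jx,\,b_jx,\dots,d_ja_j$ when $C_j=x\lor y\lor z$) is written with its two letters \emph{adjacent} in $\alp$, the separator $\zp$ being inserted only after each pair. Deleting letters never separates two surviving adjacent letters, so such a pair remains a $2$-factor of $\alp^{B}$ exactly when both of its letters avoid $B$. By hypothesis none of $a_j,b_j,c_j,d_j$ is in $B$, and none of the literals occurring in $C_j$ is in $B$; hence every meaningful pair of $\delta_j$ persists in $\alp^{B}$. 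Deleting separators or letters belonging to other clauses may create extra $2$-factors, but that can only help, never destroy one of these pairs.

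Next I would exploit the cyclic shape of the gadget. For $C_j=x\lor y\lor z$ the surviving pairs $a_jx,\,b_jx,\,b_jy,\,c_jy,\,c_jz,\,d_jz,\,d_ja_j$ form a closed alternating walk $a_j\to x\to b_j\to y\to c_j\to z\to d_j\to a_j$, so each of $a_j,b_j,c_j,d_j$ sits on a cycle. Reading this cycle off as a chain of $2$-factors yields the witnesses demanded by Definition \ref{thm:freedef}; for $\ell=a_j$ the chain
\begin{align*}
a_j &x\\
b_j &x\\
b_j &y\\
c_j &y\\
c_j &z\\
d_j &z\\
d_j &a_j
\end{align*}
shows $a_j$ is not free, and the arcs of the same cycle beginning and ending at $b_j$, $c_j$, $d_j$ do the same for those letters (the chords $a_jb_j,a_jc_j,a_jd_j$ supply shorter witnesses where convenient).

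Finally I would repeat this reading for the remaining three clause shapes. The only change is that a negated literal $\ov{p}$ enters its pairs with the reversed orientation, so the arc through that variable is traversed the other way; for instance in $x\lor y\lor\ov z$ the pairs $\ov z d_j$ and $\ov z a_j$ (together with $c_jd_j$) replace $d_jz$ and $d_ja_j$, and the closed walk through $a_j,b_j,c_j,d_j$ stays intact. I expect the bulk of the work — and the only place needing genuine care — to be this four-way case analysis together with the bookkeeping that each witnessing chain uses only pairs guaranteed to survive in $\alp^{B}$. The conceptual crux is that the surviving-pair analysis invokes precisely the hypothesis that the literals of $C_j$ are preserved: a literal node is the meeting point of two consecutive pairs of the cycle, so deleting such a literal is what would break the walk and potentially free the clause letters, whereas as long as no literal of $C_j$ is removed the letters $a_j,b_j,c_j,d_j$ remain pinned in place.
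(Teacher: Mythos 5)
Your overall strategy is the paper's own: exhibit, for each of $a_j,b_j,c_j,d_j$, an explicit alternating chain of 2-factors as in Definition \ref{thm:freedef}, after noting that the gadget's 2-factors survive in $\alp^B$. Your chain for $a_j$ is exactly the paper's, and your survival argument (both letters of each meaningful pair avoid $B$; deleting other letters never separates two surviving adjacent letters; newly created 2-factors can only add witnesses, never remove them) is a point the paper leaves implicit and is correct.

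The genuine gap is your treatment of $b_j$, $c_j$, $d_j$. Non-freeness is a \emph{directed}, alternating condition: the chain must run from a 2-factor having the letter as its \emph{first} component to a 2-factor having it as its \emph{second} component, with the matches alternating between second and first positions --- merely ``sitting on a cycle'' does not imply non-freeness, which is precisely why the graph $\Ga$ carries two copies of each vertex. In the clause $x\lor y\lor z$, the closed walk $a_j\to x\to b_j\to y\to c_j\to z\to d_j\to a_j$ has an alternation defect at $a_j$: the consecutive pairs $d_ja_j$ and $a_jx$ share no letter in the same position, so no chain can pass through $a_j$ along the cycle. Consequently there is no ``arc of the cycle beginning and ending at $b_j$'' at all --- indeed $b_j$ occurs as the second letter of a 2-factor only in the chord $a_jb_j$, never on the cycle --- so the chords are not optional shortcuts but mandatory, and the direction of traversal is forced. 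The correct witnesses are the paper's: $b_jx,\ a_jx,\ a_jb_j$ for $b_j$; $c_jy,\ b_jy,\ b_jx,\ a_jx,\ a_jc_j$ for $c_j$; and $d_jz,\ c_jz,\ c_jy,\ b_jy,\ b_jx,\ a_jx,\ a_jd_j$ for $d_j$; each runs backwards along the variable path to $a_j$ and closes with a chord. The same repair is needed in your negated-clause cases: for $x\lor y\lor\ov{z}$, for instance, a valid witness for $d_j$ is $d_ja_j,\ \ov{z}a_j,\ \ov{z}d_j$, again built from a chord rather than from an arc of your closed walk.
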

\begin{proof}
If $C_j$ is of the form $x\lor y\lor z$ then the path

\begin{align*}
&a_jx\\
&b_jx\\
&b_jy\\
&c_jy\\
&c_jz\\
&d_jz\\
&d_ja_j
\end{align*}
shows $a_j$ is not free. Similarly the path
\begin{align*}
&b_jx\\
&a_jx\\
&a_jb_j
\end{align*}
yields that $b_j$ is not free, while 
\begin{align*}
&c_jy\\
&b_jy\\
&b_jx\\
&a_jx\\
&a_jc_j\\
\end{align*}
and 
\begin{align*}
&d_jz\\
&c_jz\\
&c_jy\\
&b_jy\\
&b_jx\\
&a_jx\\
&a_jd_j\\
\end{align*}
demonstrate that $c_j$ and $d_j$ are not free.
The arguments for the remaining three cases where $C_j$ contains negated variables are substantially similar.
\end{proof}

The following lemma is easily established by inspecting $\alp$.
\newtheorem{bfreed}[thm]{Lemma}
\begin{bfreed}\label{thm:bfreed}
Let $\phi = C_1 \land C_2 \land \dots\land C_m$ be a 3-CNF boolean formula and let $\alp$ be the word of $\phi$. For $i \le m$ the letters $b_j, c_j$ and $d_j$ are free for $\alp^{a_j}$
\end{bfreed}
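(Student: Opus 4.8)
The plan is to reason with the graph characterisation of freeness supplied by Lemma~\ref{thm:gfree}. Fixing the clause $C_j$ and a letter $\ell\in\{b_j,c_j,d_j\}$, it suffices to show that the directed graph $G_{\alp^{a_j}}$ contains no path from a node having $\ell$ as its first component to a node having $\ell$ as its second component.

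First I would record what the deletion of $a_j$ does to the two-factors of $\alp$. In the construction every occurrence of $a_j$ sits inside a piece of the shape $a_jp\zp$ or $pa_j\zp$, so $a_j$ is always flanked on at least one side by a separator $\zp$, and each separator occurs only once. Hence replacing $\alp$ by $\alp^{a_j}$ deletes exactly those two-factors in which $a_j$ appears and creates no new two-factor between two non-separator letters, since every newly created adjacency involves a (unique) separator. Thus the two-factors of $\alp^{a_j}$ among the non-separator letters are precisely those of $\alp$ with the $a_j$-factors struck out.

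The second, and central, step is the observation -- checked by inspecting the four clause templates -- that after deleting $a_j$ each of $b_j,c_j,d_j$ is left with no non-separator two-factor in at least one of the two coordinate positions. For instance, in the three templates whose first literal is positive, the letter $b_j$ occurs as a second component of a non-separator pair only in $a_jb_j$, which vanishes; in the all-negated template $b_j$ instead occurs as a first component of a non-separator pair only in $b_ja_j$, which again vanishes. A short case table shows the same dichotomy for $c_j$ and $d_j$ in every template: each retains non-separator occurrences in at most one of the first- and second-component positions.

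Finally I would close the argument with a dead-end analysis of the separators. Suppose, say, that $\ell$ has no non-separator two-factor with $\ell$ as second component. Then every node of $G_{\alp^{a_j}}$ carrying $\ell$ as second component has the form $z_i\ell$ with $z_i$ a unique separator, and tracing incoming edges shows that such nodes are reachable in reverse only from other nodes carrying $\ell$ as second component; no path can enter this set from a node having $\ell$ as first component. The mirror-image computation handles the case in which $\ell$ has no non-separator first-component pair. Either way $\ell$ is free for $\alp^{a_j}$. The main obstacle is exactly this last passage: Lemma~\ref{thm:ynotfree} only told us that certain chains witnessing non-freeness get broken, whereas here we need the universal statement that no chain at all survives, and the rigour of that step rests on the separator letters behaving as dead ends together with the uniform verification of the coordinate-dichotomy across all four clause shapes.
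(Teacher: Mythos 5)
Your proposal is correct, and it is worth noting that the paper itself gives no proof at all here: Lemma \ref{thm:bfreed} is dispatched with the single remark that it ``is easily established by inspecting $\alp$.'' Your argument is a rigorous version of that inspection, carried out with the graph machinery of Lemma \ref{thm:gfree}, and both of its pillars check out against the construction. First, since every occurrence of $a_j$ is flanked by a unique separator on at least one side, deleting $a_j$ creates only 2-factors containing a separator, so the non-separator 2-factors of $\alp^{a_j}$ are exactly those of $\alp$ with the $a_j$-factors removed. Second, the coordinate dichotomy holds in all four clause templates: in the all-positive template, $b_j$, $c_j$, $d_j$ survive only as first components (the factors $a_jb_j$, $a_jc_j$, $a_jd_j$ all vanish); in the template $x\lor y\lor\ov{z}$ the letters $b_j$, $c_j$ survive only as first components and $d_j$ only as a second component; in $x\lor\ov{y}\lor\ov{z}$ only $b_j$ survives as a first component while $c_j$, $d_j$ survive as second components; and in the all-negated template all three survive only as second components. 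Your dead-end analysis then finishes the job correctly: if every 2-factor with $\ell$ as second component has the form $\zp\ell$, then in the graph of $\alp^{a_j}$ the predecessors of ${}^1\zp\ell$ are nodes with $\ell$ as second component, and (by uniqueness of each $\zp$) the sole predecessor of ${}^0\zp\ell$ is ${}^1\zp\ell$, so a backward trace never escapes the set of nodes carrying $\ell$ as second component, none of which has $\ell$ as first component; the mirror argument covers the other half of the dichotomy. This is strictly more than the paper supplies for this lemma, and it is consistent in style with the path-tracing arguments the paper does spell out in Lemmas \ref{thm:ynotfree}, \ref{thm:dnotfree} and \ref{thm:delsingle}.
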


\newtheorem{dnotfree}[thm]{Lemma}
\begin{dnotfree}\label{thm:dnotfree}
Let $\phi = C_1 \land C_2 \land \dots\land C_m$ be a 3-CNF boolean formula. Let $B = \{y_1, y_2, \dots, y_k\} \subseteq \A\setminus \{e\}$. Suppose $y_1, y_2, \dots, y_k$ is a free deletion sequence for $\alp$. Suppose furthermore that $B$ is such that, if $p_i$ is a literal in $C_j$, then the letter $p_i$ is not in $B$. Then $e$ is not free for $\alp^B$.
\end{dnotfree}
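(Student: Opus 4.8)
The plan is to exhibit, for every admissible $B$, an explicit non-freeness witness for $e$ in $\alp^B$ of the shortest possible kind: a triple of 2-factors matching the $n=1$ instance of Definition~\ref{thm:freedef}, namely $ea_1,\ b_1a_1,\ b_1e$. (Equivalently, by Lemma~\ref{thm:gfree}, one displays a length-three directed path in $G_{\alp^B}$ from a node with $e$ as first component to a node with $e$ as second component.) The argument then reduces to two things: knowing which 2-factors involving $e$ survive deletion of $B$, and checking that the auxiliary letters needed to chain them together are themselves not deleted.

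First I would record a survival principle. If two letters $p,q$ are adjacent in $\alp$ and neither lies in $B$, then $pq$ remains a 2-factor of $\alp^B$, since deleting other letters can only merge surviving neighbours, never separate them. Hence every 2-factor built into a clause factor or variable factor survives as long as both of its endpoints avoid $B$. By hypothesis $e\notin B$ and no literal occurring in a clause lies in $B$, so the only presence that still needs securing is that of the clause letters $a_j,b_j,c_j,d_j$.

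The key preliminary is therefore to show $a_j\notin B$ for every clause $C_j$. Suppose not, and let $y_t$ be the \emph{first} element of the free deletion sequence $y_1,\dots,y_k$ lying in $\{a_j,b_j,c_j,d_j\}$. Put $B'=\{y_1,\dots,y_{t-1}\}$. Minimality of $t$ gives $B'\cap\{a_j,b_j,c_j,d_j\}=\emptyset$, and $B'\subseteq B$ contains no clause literal, so Lemma~\ref{thm:ynotfree} applies and shows $y_t$ is not free for $\alp^{B'}$, contradicting that $y_1,\dots,y_k$ is a free deletion sequence. Thus none of $a_j,b_j,c_j,d_j$ is ever deleted. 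I expect this lemma-chaining step to be the main obstacle, since it is exactly here that the two hypotheses (free deletion sequence; no clause literal deleted) must be combined with the precisely correct instance of Lemma~\ref{thm:ynotfree}.

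With $a_j$ and all clause literals guaranteed present, I finish by inspecting any one clause $C_j$. If $C_j$ contains a positive literal $x$ (the first three construction cases, whose leading literal is always positive), then $ex$ (from the variable factor $ex\ov{x}e\zp$), $a_jx$, and $a_je$ all survive, and they constitute the witness $ea_1,\ b_1a_1,\ b_1e$ with $a_1=x$, $b_1=a_j$; hence $e$ is not free. If $C_j$ is of the all-negated form, then $ea_j$, $\ov{z}a_j$, and $\ov{z}e$ survive, where $\ov{z}$ is any negated literal of $C_j$ (present because it occurs in a clause) and $\ov{z}e$ comes from the variable factor $ex_i\ov{x}_ie\zp$; these give the witness with $a_1=a_j$, $b_1=\ov{z}$. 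In either case a three-factor witness shows that $e$ is not free for $\alp^B$, completing the proof.
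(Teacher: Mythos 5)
Your proposal is correct and takes essentially the same approach as the paper: you use Lemma \ref{thm:ynotfree} (applied via a first-deleted-element minimality argument, a step the paper leaves implicit) to conclude that none of $a_j,b_j,c_j,d_j$ lies in $B$, and then exhibit exactly the paper's two three-factor witnesses, $ex,\ a_jx,\ a_je$ and $ea_j,\ \ov{z}a_j,\ \ov{z}e$. The only quibble is your parenthetical claim that $\ov{z}$ may be ``any'' negated literal of $C_j$ in the all-negated case---the 2-factor $\ov{z}a_j$ exists only for the last literal of the clause---but since your displayed witness uses precisely that letter, the argument stands.
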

\begin{proof}
Suppose $C_j$ and $B$ are as in the statement of the Lemma. Since $y_1, y2, \dots,y_k$ is a free deletion sequence, Lemma  \ref{thm:ynotfree} gives us that none of the letters $a_j, b_j, c_j$ and $d_j$ are in $B$. 

If $x_i$ is the first literal in $C_j$, then the path
\begin{align*}
&ex_i\\
&a_jx_i\\
&a_je\\
\end{align*}
ensures that $e$ is not free. 

On the other hand, if $\ov{x_i}$ is the first literal in $C_j$, then we know (using our assumption that negated variables always appear after non-negated variables in a clause) that $C_j$ is of the form $\ov{x_i}\lor \ov{y}\lor \ov{z}$, where $y$ and $z$ are variables of $\phi$ and the path
\begin{align*}
&ea_j\\
&\ov{z}a_j\\
&\ov{z}e\\
\end{align*}
shows that $e$ is not free. 
\end{proof}

\newtheorem{xxbar}[thm]{Lemma}
\begin{xxbar}\label{thm:xxbar}
Let $B\subseteq \A\setminus\{e\}$. If there is an $i$ such that both $x_i$ and $\ov{x}_i$ are in $B$, then $\alp^B$ is avoidable.
\end{xxbar}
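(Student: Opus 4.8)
The plan is to produce a square in $\alp^B$ — that is, a factor consisting of a single repeated letter — and then to argue, exactly as in the proof of Lemma~\ref{thm:avdense}, that any word harbouring such a square is avoidable. To locate the square, I would return to the construction of $\alp$: for the variable $x_i$ the word $\alp$ contains the factor $e x_i \ov{x}_i e \zp$. Since both $x_i$ and $\ov{x}_i$ belong to $B$, they are deleted when we pass to $\alp^B$, whereas $e$ survives because $B \subseteq \A\setminus\{e\}$. As the two occurrences of $e$ in $e x_i \ov{x}_i e$ are separated only by $x_i$ and $\ov{x}_i$, deleting these two letters makes the $e$'s adjacent, so $ee$ is a factor of $\alp^B$.

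It then remains to show that any word $w$ containing a factor $yy$ is avoidable, which I would do by means of Theorem~\ref{thm:bemmain}. The crucial point is that the property of containing a square is preserved by both reduction operations. A free-letter deletion can never remove $y$, since (as recorded in the proof of Lemma~\ref{thm:avdense}) $y$ is not free whenever $yy$ occurs; and deleting letters different from $y$ leaves the adjacent pair $yy$ untouched. An identification of letters, realised by a substitution $\sigma$, carries the factor $yy$ to $\sigma(y)\sigma(y)$, which is again a square. Hence every word obtainable from $w$ by a sequence of these two operations still contains a square, while the empty word contains none. By Theorem~\ref{thm:bemmain}, $w$ cannot be reduced to the empty word, and so $w$ is avoidable.

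Applying this with $w = \alp^B$ and $y = e$ finishes the proof. I expect the only delicate point to be the verification that a square is invariant under the identification operation, and not merely under free-letter deletion — the step left implicit in the corresponding argument for Lemma~\ref{thm:avdense}. Once this invariance is established the conclusion follows at once.
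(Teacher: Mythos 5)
Your proof is correct and follows essentially the same route as the paper: the paper's one-line proof simply observes that the factor $ex_i\ov{x}_ie$ of $\alp$ collapses to $ee$ in $\alp^B$, implicitly relying on the square-implies-avoidable argument given inside the proof of Lemma \ref{thm:avdense}. Your additional check that squares are preserved under letter identification (not only under free-letter deletion) makes explicit a step that the paper's Lemma \ref{thm:avdense} argument glosses over, but the underlying approach is identical.
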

\begin{proof}
If $x_i$ and $\ov{x}_i$ are both in $B$, then $ex_i\ov{x}_ie^B = ee$ is a factor of $\alp^B$.
\end{proof}

\newtheorem{zfree}[thm]{Lemma}
\begin{zfree}\label{thm:zfree}
Let $w$ be a word of the form $\zp\zp\dots\zp$. Every letter in $w$ is free.
\end{zfree}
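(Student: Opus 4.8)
The plan is to exploit the one crucial feature of such a word: by the separator convention, each letter of $Z$ is used exactly once in $\alp$, so the letters making up $w = \zp\zp\dots\zp$ are pairwise distinct. Consequently every letter of $w$ occurs exactly once, and hence has at most one immediate predecessor and at most one immediate successor in $w$. The entire argument rests on this observation, together with the elementary fact that a word whose letters are distinct cannot contain any factor of the form $yy$.

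Fix a letter $x$ of $w$ and suppose, toward a contradiction, that $x$ is not free. Then Definition \ref{thm:freedef} supplies $n>0$ and letters $a_1,\dots,a_n,b_1,\dots,b_n$ for which all of $xa_1$, $b_1a_1$, $b_1a_2$, $b_2a_2$, $\dots$, $b_nx$ are factors of $w$. I would first read off the base of an induction: since $xa_1$ and $b_1a_1$ are both factors, $a_1$ has both $x$ and $b_1$ as immediate predecessors, so uniqueness of predecessors forces $b_1=x$; then $b_1a_2=xa_2$ is a factor, so $a_2$ and $a_1$ are both the unique successor of $x$, whence $a_2=a_1$.

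Next I would push this through the alternating chain by induction, showing that $b_i=x$ and $a_i=a_1$ for every $i$: the factors sharing a second component collapse the $b_i$ to $x$ by uniqueness of predecessors, while the factors sharing a first component collapse the $a_i$ to $a_1$ by uniqueness of successors. In particular $b_n=x$, so the terminal factor $b_nx$ is the factor $xx$ of $w$. This contradicts the distinctness of the letters of $w$, and therefore $x$ is free; since $x$ was arbitrary, every letter of $w$ is free. (Boundary letters are handled automatically: if $x$ is the first letter of $w$ there is no factor $b_nx$ at all, and if $x$ is the last letter there is no factor $xa_1$, so in either case the witnessing chain cannot exist.)

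The only real care needed is the bookkeeping of the alternating chain of Definition \ref{thm:freedef}, so that the induction closes and delivers $b_n=x$; equivalently, one may phrase the same argument through the graph $\Ga$ of Lemma \ref{thm:gfree}, where the uniqueness of neighbours means the only walk available from an $x$-first node is the forced one that returns to an $x$-second node carrying the factor $xx$. Either way, once distinctness is in hand there is no genuine obstacle: the substance of the lemma is exactly that distinct-letter words avoid the closed chains witnessing non-freeness.
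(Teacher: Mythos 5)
Your proof is correct and rests on exactly the same observation as the paper's one-line proof: each letter $\zp$ occurs at most once in $w$, so no non-freeness witness can exist. You simply spell out the routine details the paper leaves implicit (uniqueness of predecessors and successors collapses the witnessing chain of Definition \ref{thm:freedef} to a forbidden factor $xx$), which is a fine, if more verbose, rendering of the same argument.
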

\begin{proof}
Each of the letters $\zp$ appears at most once in $w$.
\end{proof}

%\newtheorem{zfine}[thm]{Lemma}
%\begin{zfine}\label{thm:zfine}
%Let $B\subset \A$. If the only free letters in $\alp^B$ are in $Z_{\alp}$, then, for any $\zp\in Z_{\alp}$,  the only free letters for $\alp^{B,\zp}$ are also in $Z_{\alp}$.
%\end{zfine}
%\begin{proof}
%TODO. DO WE NEED THIS LEMMA?
%\end{proof}

\newtheorem{xfree}[thm]{Lemma}
\begin{xfree}\label{thm:xfree}
%The letters in $X_{\al}$ are all free for $\al$.
Let $\phi = C_1 \land C_2 \land \dots\land C_m$ be a 3-CNF boolean formula in $n$ variables and let $\al = \al_{\phi}$  be the word of $\phi$. Fix $k < n$. Let $S_k = \{p_1, p_2, \dots, p_k\}$, where for each $i$, either $p_i = x_i$ or $p_i = \ov{x}_i$. Both $x_{k+1}$ and $\ov{x}_{k+1}$ are free for $\alp^{S_k}$.
\end{xfree}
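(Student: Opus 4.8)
The plan is to invoke the graph characterisation of Lemma \ref{thm:gfree}: a letter $y$ fails to be free for a word $w$ exactly when the graph of $w$ contains a path from a node having $y$ as its first component to a node having $y$ as its second component. I would apply this with $w=\alp^{S_k}$ and $y\in\{x_{k+1},\ov{x}_{k+1}\}$, and argue that no such path exists. The whole argument rests on isolating, inside the graph of $\alp^{S_k}$, the single 2-factor $x_{k+1}\ov{x}_{k+1}$.

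First I would reduce to the ``active'' 2-factors, i.e. the listed 2-factors with their trailing separators stripped. Since every element of $Z$ occurs exactly once in $\alp$ (and deletion never duplicates letters), each $\zp$ is the first letter of at most one 2-factor and the second letter of at most one 2-factor; consistent with Lemma \ref{thm:zfree}, any traversal through a node carrying a $\zp$ merely reconnects two factors that share a fixed non-separator letter, and that reconnection is already provided by a direct edge between the same endpoints. Hence separators add no connectivity and may be disregarded. Next I would record the effect of deleting $S_k$. Because $S_k$ contains only literals of the first $k$ variables and $\zp\notin S_k$, the block $ex_{k+1}\ov{x}_{k+1}e\zp$ of the $(k{+}1)$-st variable is left intact; moreover, since the right-neighbours of $x_{k+1}$ (namely $\ov{x}_{k+1}$ and separators) and the left-neighbours of $\ov{x}_{k+1}$ (namely $x_{k+1}$ and separators) never lie in $S_k$, no deletion merges a new letter onto $x_{k+1}$ from the right or onto $\ov{x}_{k+1}$ from the left, so no fresh active 2-factor with $x_{k+1}$ first or $\ov{x}_{k+1}$ second is created.

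The core is then a short isolation argument. Using that non-negated literals occur only as second components of clause 2-factors while negated literals occur only as first components, together with the intact variable block, I would establish two uniqueness facts: the only active 2-factor with $x_{k+1}$ as first letter is $x_{k+1}\ov{x}_{k+1}$, and the only active 2-factor with $\ov{x}_{k+1}$ as second letter is $x_{k+1}\ov{x}_{k+1}$. For $x_{k+1}$, any path must begin at ${}^0x_{k+1}\ov{x}_{k+1}$; matching the second letter forces ${}^1x_{k+1}\ov{x}_{k+1}$, and matching the first letter returns to ${}^0x_{k+1}\ov{x}_{k+1}$, so the reachable set is exactly $\{{}^0x_{k+1}\ov{x}_{k+1},{}^1x_{k+1}\ov{x}_{k+1}\}$, no node of which has $x_{k+1}$ as its second component; hence $x_{k+1}$ is free. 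For $\ov{x}_{k+1}$, the target nodes are precisely that same pair, and the two uniqueness facts show the pair has no incoming edge from outside itself, while every admissible start node (having $\ov{x}_{k+1}$ as first component) lies outside the pair; so again no path exists.

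The main obstacle I anticipate is not the isolation argument, which is brief, but the bookkeeping that legitimises passing to active 2-factors: I must be certain that separators genuinely contribute nothing, and, more delicately, that deleting the chosen literals of the first $k$ variables neither destroys $x_{k+1}\ov{x}_{k+1}$ nor manufactures a new 2-factor placing $x_{k+1}$ at the start or $\ov{x}_{k+1}$ at the end of some pair. Concretely this means checking the local neighbourhoods of $x_{k+1}$ and $\ov{x}_{k+1}$ across the four clause templates and the variable block, which becomes routine once the two uniqueness statements above are fixed as the precise targets.
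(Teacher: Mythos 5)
Your proposal is correct and follows essentially the same route as the paper's proof: both rest on the two uniqueness facts that (ignoring separator letters) the only 2-factor of $\alp^{S_k}$ with $x_{k+1}$ as first letter is $x_{k+1}\ov{x}_{k+1}$ and the only one with $\ov{x}_{k+1}$ as second letter is also $x_{k+1}\ov{x}_{k+1}$, so any path in the graph merely cycles within that pair and never reaches a forbidden target node. The paper packages this as a (trivial) induction on $k$ and leaves the separator handling and the effect of deleting $S_k$ implicit, whereas you argue directly and spell out those checks --- a difference of presentation, not of substance.
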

\begin{proof}
We proceed by induction on $k$. For $k = 0$, we have $S_k = \emptyset$.
Hence, the only 2-factor (excluding those containing $\zp$) that contains $x_1$ as the first letter is $x_1\ov{x}_1$ and the only 2-factor that contains $\ov{x}_1$ as the second letter is $x_1\ov{x}_1$. So the only path starting at a 2-factor having $x_1$ as the first first letter is the one-cycle from $x_1\ov{x}_1$ to itself. Our base case has thus been established.

Now suppose the lemma holds for some $k$. Again, the only 2-factor containing $x_k$ as the first letter is $x_k\ov{x}_k$ and the only 2-factor that contains $\ov{x}_k$ as the second letter is $x_k\ov{x}_k$. The lemma immediately follows.
\end{proof}

%\newtheorem{transfree}[thm]{Lemma}
%\begin{transfree}\label{thm:transfree}
%Let $B\subseteq X_{\al}$. Let $x\in X_{\al}$ and let $a\in Y_{\al}$. If $a$ is free for $\alp^B$, then $a$ is free for $\alp^{B,x}$.
%\end{transfree}
%\begin{proof}
%Suppose $x$ is not free for $\al^B$. There is a path ... TODO
%\end{proof}

\newtheorem{zlast}[thm]{Lemma}
\begin{zlast}\label{thm:zlast} Let $\phi$ be a 3-CNF boolean formula and let $\al = \al_{\phi}$  be the word of $\phi$.  If $\al$ is unavoidable, then there is a free deletion sequence where the letters $\zp$ are deleted after all other letters are deleted.
\end{zlast}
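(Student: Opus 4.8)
The plan is to begin with any free deletion sequence for $\al$ and to reorder it so that every separator letter (each letter of $Z$, which occurs exactly once in $\al$) is deleted only after all of the non-separator letters are gone. Since $\al$ is unavoidable, Theorem~\ref{thm:bembetter} provides a free deletion sequence $y_1, y_2, \dots, y_N$ reducing $\al$ to the empty word, each $y_i$ being free for the word obtained from $\al$ by deleting $y_1, \dots, y_{i-1}$. I would then form a new sequence $\sigma'$ by listing the non-separator letters of this sequence in their original relative order, followed by all of the separator letters. After the non-separator block of $\sigma'$ has been processed, the word that remains consists only of letters of $Z$, each occurring once, so by Lemma~\ref{thm:zfree} each of them is free and the trailing separator deletions are legal in any order. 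Thus the whole problem reduces to checking that the reordered non-separator deletions are still deletions of free letters.

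At the moment $\sigma'$ deletes a non-separator letter $y_i$, the current word is $\al^{A}$, where $A$ is the set of non-separator letters preceding $y_i$ in the original sequence; in the original sequence $y_i$ was instead free for $\al^{A \cup Z'} = (\al^{A})^{Z'}$, where $Z'$ is the set of separators deleted before $y_i$. So the crux is the following monotonicity fact: \emph{if $z$ occurs exactly once in a word $w$ and $y \neq z$, then $y$ free for $w^z$ implies $y$ free for $w$}. Granting this, I would apply it once for each $z \in Z'$ (deleting separators one at a time keeps every remaining separator a single-occurrence letter), passing from ``$y_i$ free for $(\al^A)^{Z'}$'' to ``$y_i$ free for $\al^A$'', which is exactly what is required.

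I expect the monotonicity fact to be the main obstacle, and I would prove it through the graph $\Ga$ and Lemma~\ref{thm:gfree}. Because $z$ occurs once, it has a unique left neighbour $p$ and a unique right neighbour $q$, so the only nodes of $\Ga$ mentioning $z$ are the two copies of the $2$-factors $pz$ and $zq$; deleting $z$ removes these and inserts the single node $pq$. Arguing by contraposition, I would take a path in $\Ga$ witnessing that $y$ is not free for $w$ and show it survives the deletion of $z$. The point is that a $z$-node can only occur in such a path as a forced detour through the hub letter $p$ (respectively $q$): a fragment ${}^1p\beta \to {}^0pz \to {}^1pz \to {}^0p\gamma$ can be short-circuited by the direct edge ${}^1p\beta \to {}^0p\gamma$, which exists since both $2$-factors share the first letter $p$ and neither involves $z$, and the analogous short-circuit through the shared second letter $q$ handles $zq$. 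Every $2$-factor used in the short-circuited path avoids $z$ and hence persists in $w^z$, so the short-circuited path witnesses that $y$ is not free for $w^z$.

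The fiddly part of this argument is the treatment of the endpoints of the path, where $y$ itself may be the hub letter $p$ or $q$. Here I would observe that if the unique neighbour were the \emph{only} neighbour of $y$ on the relevant side, then $y$ could not escape the two-node cluster $\{{}^0pz, {}^1pz\}$ (respectively $\{{}^0zq, {}^1zq\}$) and so would already be free, contradicting non-freeness; hence $y$ has a neighbour other than $z$, and the witnessing path may be taken to begin (or end) at a node avoiding $z$, after which the internal short-circuits apply. Combining the monotonicity fact with the reordering and Lemma~\ref{thm:zfree} then shows that $\sigma'$ is a free deletion sequence in which the separators $\zp$ are deleted after all other letters, completing the proof.
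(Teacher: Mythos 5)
Your proposal is correct and takes essentially the same approach as the paper: the paper's entire proof consists of the remark that deleting a letter $\zp$ (which occurs only once) can never make a letter free that was not already free --- precisely your monotonicity fact --- combined with Lemma~\ref{thm:zfree} to handle the trailing block of separator deletions. The difference is only one of detail: the paper asserts this fact without argument, whereas you actually prove it via the short-circuiting of paths in $\Ga$, including the endpoint cases.
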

\begin{proof}
By Lemma \ref{thm:zfree}, it suffices to note that deleting any letter $\zp$ cannot make any letter free that is not already free.
\end{proof}

\newtheorem{delsingle}[thm]{Lemma}
\begin{delsingle}\label{thm:delsingle} Let $\phi= C_1 \land C_2 \land \dots\land C_m$ be a 3-CNF boolean formula and let $\al = \al_{\phi}$  be the word of $\phi$.  If $\al$ is unavoidable, then there is a free deletion sequence where every free set that is deleted contains exactly one letter.
\end{delsingle}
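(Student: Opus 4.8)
The plan is to produce an explicit free deletion sequence for $\alp$ consisting entirely of singleton deletions, organised around a consistent partial truth assignment that we extract from the hypothesis of unavoidability. By Theorem~\ref{thm:bembetter} and Lemma~\ref{thm:zlast} we may fix a free deletion sequence reducing $\alp$ to the empty word in which every separator $\zp$ is deleted only after all other letters, so it suffices to reorganise the removal of the letters in $\A\setminus Z$ into singletons, the separators then being deleted one at a time by Lemma~\ref{thm:zfree}.

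First I would locate the step of this sequence at which $e$ is deleted and let $D$ denote the set of letters removed strictly before $e$. Since the tail of the sequence still reduces $\alp^D$ to the empty word, $\alp^D$ is unavoidable, so Lemma~\ref{thm:xxbar} forces $D$ to contain at most one of $x_i,\ov{x}_i$ for each $i$. Moreover $e$ is free for $\alp^D$, so by Lemma~\ref{thm:dnotfree} the set $D$ must meet every clause $C_j$ in a literal, since any clause all of whose literals survived would supply a chain from $e$ back to $e$. Hence $S:=D\cap X_{\alp}$ is a consistent partial assignment, one literal per variable pair, meeting every clause.

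With $S$ in hand I would write down the singleton sequence directly: first delete the literals of $S$ one at a time, each being free when its turn comes by the reasoning of Lemma~\ref{thm:xfree} (the only $2$-factor in which a surviving literal occurs as a first component is its self-pair $x_i\ov{x}_i$, a harmless self-loop); then delete $e$; then, for each clause $C_j$, delete $a_j$ and afterwards delete $b_j,c_j,d_j$ using Lemma~\ref{thm:bfreed}; then delete the surviving literals, which by now occur only next to separators and are free exactly as in Lemma~\ref{thm:zfree}; and finally delete the separators. Every deletion here removes a single letter, and each removal being of a free letter is verified using the lemmas cited at that step.

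The main obstacle is justifying the two transitions where the gadget structure does the real work: that $e$ is free for $\alp^S$, and that each $a_j$ is free for the word obtained after the deletions of $S$ and of $e$. These are converses to the non-freeness statements of Lemmas~\ref{thm:dnotfree} and~\ref{thm:ynotfree}, and I expect to establish them by direct inspection of the four clause gadgets: deleting a satisfying literal of $C_j$ severs the unique chain running through that gadget (for $e$, a chain such as $e x_i,\,a_j x_i,\,a_j e$ or its negated analogue; for $a_j$, the chain $a_j x,\,b_j x,\dots,d_j a_j$), and one checks that no replacement chain is created. Here the separators $\zp$ are essential, since they guarantee that a single deletion introduces no new meaningful $2$-factor outside the length-four variable blocks $e x_i\ov{x}_i e$, confining all interaction to the letter $e$. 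This finite case analysis over the clause types is the least automatic part; once it is in place the remaining steps are bookkeeping.
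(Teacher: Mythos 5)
Your strategy is genuinely different from the paper's. The paper proves Lemma~\ref{thm:delsingle} directly: it takes a partition $B_1,\dots,B_k$ of $\A$ into successively free sets, supposes some $B_t$ with $|B_t|>1$ cannot be serialized, and runs a case analysis on the last letter deleted from $B_t$ and the letter left stuck, deriving a contradiction in each case; satisfiability never enters. You instead extract from the sequence a consistent, clause-covering set of literals $S$ (using the point at which $e$ is deleted, together with Lemmas~\ref{thm:xxbar} and~\ref{thm:dnotfree}) and then try to write down an explicit singleton deletion sequence built around $S$. The extraction step is sound, and it avoids circularity, since none of the lemmas you invoke depend on Lemma~\ref{thm:delsingle}.

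The gap is in the explicit sequence, exactly at the step you flag as the main obstacle: it is \emph{false} that $e$ is free for $\alp^{S}$. Deleting one satisfying literal of a clause $C_j$ severs only the chain through \emph{that} literal; the chains through the other, surviving literals of the same gadget remain intact. Concretely, let $C_j = x\lor y\lor z$ and suppose $S$ meets $C_j$ only in $y$. Then $ex$, $a_jx$ and $a_je$ are all still factors of $\alp^{S}$, and this is precisely the chain used in the proof of Lemma~\ref{thm:dnotfree}; hence $e$ is not free for $\alp^{S}$. (Symmetrically, for an all-negated clause whose last literal $\ov{z}$ survives, the chain $ea_j$, $\ov{z}a_j$, $\ov{z}e$ blocks $e$.) The error is one of ordering: $e$ can only become free after the gadget letters $a_j,b_j,c_j,d_j$ have been removed, because those letters are what carry chains from $e$ back to $e$; so the stages must run as in Proposition~\ref{thm:satiff} --- literals of $S$, then the gadget letters clause by clause, then $e$, then surviving literals, then separators. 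Even after that reordering, the deferred ``finite case analysis'' is not mere bookkeeping: for instance, freeness of $a_j$ for $\alp^{S}$ is itself threatened by the chain $a_je$, $\ov{z}e$, $\ov{z}a_j$ whenever $C_j$ contains a negated literal $\ov{z}$ that survives in $\alp^{S}$, so the order in which $a_j,b_j,c_j,d_j$ are removed, and which literal of $C_j$ lies in $S$, must be handled with care. Finally, a small inaccuracy: after your earlier stages the surviving literals do not ``occur only next to separators''; for a variable untouched by $S$ the 2-factor $x_i\ov{x}_i$ survives, and its harmlessness must be justified by the self-loop argument of Lemma~\ref{thm:xfree}, not by proximity to separators.
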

\begin{proof}
Suppose $\al$ is as in the statement of the lemma. There is a partition of $\A$ into sets $B_1, B_2, \dots, B_k$ such that, for every $i<k$, we have that $B_{i+1}$ is a free set for $\al^{B_1,B_2,\dots,B_i}$. Assume for contradiction that there is some $t$ such that $|B_t| > 1$ and every deletion sequence of the individual letters in $B_t$ results in no letter $y\in \A\setminus D\setminus X_{al}$ being free for $\al^D$, where $D = B_1\cup B_2\cup\dots\cup B_{t-1}\cup E$ and $E\subset B_t$ is the set of letters in $B_t$ that are already deleted.  Let $x$ be the last letter in $E$ that was deleted and let $y\in B_t\setminus E$ be not free for $\al^D$.

{\raggedright \sl Case 1. } $x = x_i$ for some $i$. 

{\sl Subcase 1.1. } $y = x_j$ for some $j$. We have that the deletion of $x_i$ resulted in a path from a 2-factor having $x_j$ as its first component to a 2-factor that has $x_j$ as its second component. We observe that the only only 2-factors that can possibly be newly created by the deletion of $x_i$ are among the following forms: 
\begin{enumerate}
\item $a_k\zp$, $b_k\zp$, $c_k\zp$ or $d_k\zp$. Since each letter $\zp$ appears only once in $\al$ we can conclude that these factors are not in our path.
\item $e\ov{x}_i$. We conclude that there is a path from a 2-factor having $x_j$ as its first letter to $e\ov{x}_i$. But this means there is a path in $\al^{D\setminus \{x_i\}}$ from a 2-factor having $x_j$ as its first letter to $ex_i$. Thus $x_i$ and $x_j$ cannot be in the same free set, a contradiction.
\item $ee$. Similarly to the previous item, we conclude that $x_i$ and $x_j$ cannot be in the same free set as it implies a path from $x_ie$ to a 2-factor having $x_j$ as its second component before the deletion of $x_i$.
\end{enumerate}

{\sl Subcase 1.2. } $y = a_j$, or $y = b_j$, or $y = c_j$, or $y = d_j$ for some $j$. We follow the same reasoning as Subcase 1.1 and arrive at the same conclusion, showing $y$ not free implies either a path from a 2-factor having $y$ as its first letter to a two factor having $x$ as its second component, or vice versa.

{\sl Subcase 1.3. } $y = e$. Our reasoning is substantially similar to the previous two subcases.

{\raggedright \sl Case 2. } $x = \ov{x}_j$ for some $j$. This is symmetric to Case 1.

{\raggedright \sl Case 3. }  $y = a_j$,  $y = b_j$, $y = c_j$, or $y = d_j$ for some $j$. The deletion of $x$ results only in new 2-factors containing one or more of the $\zp$ letters, so a new path from a 2-factor having $y$ as its first letter to a 2-factor having $y$ as its second letter could not have been created by virtue of deleting $x$.

{\raggedright \sl Case 4. } $x = e$.  We know $y \ne x_j$ for any $j$ since this would imply the existence of the 2-factor $ex_j$, negating the assumption that $x_j$ and $e$ are in the same free set. Similarly $y\ne \ov{x}_j$ by virtue of the 2-factor $\ov{x}_je$ and $y\ne a_j$ 
because of either $a_je$ or $ea_j$ would have been a 2-factor before the deletion of $e$, so we are left with the possibilities of $y = b_j$, $y = c_j$ or $y = d_j$.  

Suppose $y = b_j$. From Lemma \ref{thm:bfreed} we have that $a_j\notin D$.
 If $C_j$ consists of three negated variables, then we know that $\ov{z}\in D$, where $\ov{z}$ is the last literal in $C_j$, for otherwise the path 
\begin{align*}
&ea_j\\
&\ov{z}a_j\\
&\ov{z}e\\
\end{align*}
would contradict the assumption that $e$ is free. But then there is no path from a 2-factor having $b_j$ as its first component to a 2-factor having $b_j$ as its second component, contradicting that $b_j$ is not free. On the other hand, if $C_j$ contains a non-negated variable, we arrive at a similar contradiction using the path
 \begin{align*}
&ez\\
&a_jz\\
&a_je\\
\end{align*}
where $z$ is the first literal in $C_j$.
\iffalse
OLD Since $y$ was free for $\al^{B_1,B_2,\dots,B_t}$,  Lemma \ref{thm:ynotfree} gives us that some letter $p_i$ appearing as a literal in $C_j$ has already been deleted. Consequently there are no 2-factors left having a letter $x_i$ or $\ov{x}_i$ as one of its elements and $a_j$, $b_j$ or $c_j$ as the other. It follows that 2-factors containing $b_j$ can only be connected to 2-factors containing $x_i$ or $\ov{x}_i$ through a 2-factor containing $d$. We therefore conclude that the only new 2-factors created by the deletion of $d$ that can be connected to a 2-factor containing $b_j$ are $\zp a_j$ and $a_j\zp$. Once again using the fact that every $\zp$ is unique, we conclude that there is no path in $\al^D$ from $b_j$ to itself, a contradiction.
\fi 
The arguments for $y = c_j$ and $y = d_j$ substantially identical.
The cases are exhausted. This concludes the proof.
\end{proof}

\newtheorem{delorder}[thm]{Lemma}
\begin{delorder}\label{thm:delorder} Let $\phi = C_1 \land C_2 \land \dots\land C_m$ be a 3-CNF boolean formula in $n$ variables and let $\al = \al_{\phi}$  be the word of $\phi$.  If $\al$ is unavoidable, then there is a deletion sequence of free letters that starts by deleting either $x_i$ or $\ov{x}_i$, for $i \le n$.
\end{delorder}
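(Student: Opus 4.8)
The plan is to take any free-letter deletion sequence witnessing the unavoidability of $\al$ and normalise it so that its very first deletion is forced to be a literal. The key observation is that the letter deleted first must be free for $\al$ itself, since no deletions have yet been performed; so it suffices to determine exactly which letters of $\A$ are free for $\al$ and then to arrange that the first deletion avoids the one ``harmless'' class of free letters, namely the separators $\zp$.

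First I would invoke Theorem \ref{thm:bembetter} to obtain, from the unavoidability of $\al$, a reduction of $\al$ to the empty word by successive deletions of free sets. I would then normalise this reduction in two ways. Using Lemma \ref{thm:delsingle} I may assume that every free set deleted consists of a single letter, and using Lemma \ref{thm:zlast} I may assume that all the separator letters $\zp$ are deleted only after every other letter has been removed. The resulting sequence deletes one letter at a time and removes all separators last; since $\al$ contains letters other than separators, its first deletion is therefore a single non-separator letter that is free for $\al$.

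Next I would pin down the free letters of $\al$ by applying the earlier structural lemmas with $B=\emptyset$. Lemma \ref{thm:ynotfree}, whose hypotheses are vacuously met by $B=\emptyset$ for every clause $C_j$, shows that none of $a_j,b_j,c_j,d_j$ is free for $\al$; and Lemma \ref{thm:dnotfree}, applied to the empty (and trivially free) deletion sequence $B=\emptyset$, shows that $e$ is not free for $\al$. Hence the only letters free for $\al$ are the separators $\zp$ and the literals $x_i,\ov{x}_i$. Combining this with the previous paragraph, the first letter deleted is a free non-separator letter, and so must be some $x_i$ or $\ov{x}_i$, which is exactly the assertion.

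The step I expect to be the main obstacle is the simultaneous normalisation, since Lemma \ref{thm:delsingle} and Lemma \ref{thm:zlast} are each stated only as separate ``we may assume'' results and I must check that one sequence can enjoy both properties at once. I would argue this by applying the single-letter refinement underlying Lemma \ref{thm:delsingle} to a sequence already in separator-last form: because that refinement only reorders deletions within each free set and never moves a letter out of its set, it preserves the property that the separators are deleted last. The only remaining point requiring care is confirming that the first deletion is genuinely a non-separator letter, which is immediate once the separator-last normalisation is in force and one observes that $\al$ contains letters other than the $\zp$.
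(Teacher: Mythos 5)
There is a genuine gap, and it comes from reading the statement too weakly. The lemma is not merely asserting that the \emph{first} deleted letter is some literal; as the paper's own proof makes explicit (``we can alter any deletion sequence of free letters to one where the first $n$ deletions are as described by Lemma \ref{thm:xfree}'') and as the downstream use in Proposition \ref{thm:satiff} requires (``the first $n$ deletions are $p_1,\dots,p_n$ with, for every $i$, either $p_i=x_i$ or $p_i=\ov{x}_i$''), the claim is that there is a free deletion sequence whose first $n$ deletions consist of exactly one literal from each pair $\{x_i,\ov{x}_i\}$, for every $i\le n$. This stronger form is essential: in the unsatisfiability direction of Proposition \ref{thm:satiff}, those first $n$ deletions are read off as a truth assignment, which is impossible to extract from the statement you prove. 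Your argument --- applying Lemmas \ref{thm:ynotfree} and \ref{thm:dnotfree} with $B=\emptyset$ to see that only literals and separators are free for $\al$ itself, and using Lemmas \ref{thm:delsingle} and \ref{thm:zlast} to normalise --- correctly pins down the very first deletion, but it says nothing about deletions $2$ through $n$. Indeed it cannot be iterated: once a literal occurring in clause $C_j$ has been deleted, the hypotheses of Lemma \ref{thm:ynotfree} fail for $C_j$, and $a_j$ genuinely can become free (this is exactly what happens in the satisfiable case), so nothing you prove prevents the second deletion from being $a_j$ rather than a literal.

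What is missing is the paper's central exchange argument: one must show that whenever some $x\in X_{\al}$ is deleted immediately after a letter $y\notin X_{\al}$, and $\ov{x}$ is deleted later than $x$, the two deletions can be transposed while every deletion remains free. Proving this requires the case analysis over $y=a_j$, $y\in\{b_j,c_j,d_j\}$, and $y=e$, using the observation that the only new non-separator $2$-factor created by deleting $x_i$ is $e\ov{x}_i$, and deriving a contradiction from any putative path through $e\ov{x}_i$ that would witness $y$ becoming unfree after the swap. Repeatedly applying these transpositions is what bubbles one literal per variable to the front of the sequence; your proposal contains no substitute for this step, so the proof as written establishes only a strictly weaker statement than the lemma needs to provide.
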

\begin{proof}
Suppose $\al$ is unavoidable. By Lemma \ref{thm:delsingle} there is a deletion sequence of free letters reducing $\al$ to the empty word. We may assume by Lemma \ref{thm:zlast} that all the letters $\zp$ appear at the end of the deletion sequence.
We know from Lemma \ref{thm:xfree} that it is possible to delete $x_i$ or $\ov{x}_i$ as the $i$th letter in a deletion sequence of free letters. We need to establish that we can alter any deletion sequence of free letters to one where the first $n$ deletions are as described by Lemma \ref{thm:xfree}.

It suffices to show that we can always invert the deletion order whenever an $x\in X_{\al}$ is deleted immediately after some letter $y \notin X_{\al}$ and $\ov{x}$ is deleted after $x$, where $\ov{\ov{x}} = x$. 

{\raggedright\sl Case 1.} $y = a_j$. We start by noting that $x$ is already free before the deletion of $a_j$ since no new 2-factor that does not contain a letter $\zp$ is created by deleting any letter not in $X_{\al}$ or in $Z$. Suppose $x$ is a non-negated variable, i.e. $x = x_i$ for some $i\le n$. Suppose for contradiction that inverting the deletion order of $x_i$ and $a_j$ results in $a_j$ not being free. We notice that  the only new 2-factor (excluding ones with $\zp$ letters) created by the deletion of $x_i$ is $e\ov{x}_i$, so after the deletion of $x_i$ there is a path from a 2-factor having $a_j$ as its first letter to $e\ov{x}_i$ and a path from $e\ov{x}_i$ to a 2-factor having $a_j$ as its second letter. We now notice that the only 2-factor having $\ov{x}_i$ as its second letter is $e\ov{x}_i$, so the immediate predecessor to $e\ov{x}_i$ in our malignant path has $e$ as its first letter. But this means that the immediate successor to $e\ov{x}_i$ in the path has $\ov{x}_i$ as its second letter. But again the only 2-factor having $\ov{x}_i$ as its second letter is $e\ov{x}_i$, so the path cannot proceed to any 2-factor not already in the path. Hence there is already a 2-factor having $a_j$ as its second letter at some earlier point in the path, contradicting our assumption that $a_j$ was free before $x_i$ was deleted.  Supposing, on the other hand,  that $x = \ov{x}_i$ leads to the same contradiction through symmetric reasoning, where we end up in a dead end at the 2-factor $xe$.

{\raggedright\sl Case 2.} $y \in \{b_j, c_j, d_j\}$.  The argument is essentially the same as Case 1.

{\raggedright\sl Case 3.} $y = e$. Suppose again, for contradiction, that $e$ is not free as the result of deleting $x_i$. Again the only new 2-factor created is $e\ov{x}_i$, so there is a path from $e\ov{x}_i$ to a 2-factor having $e$ as its second letter. But since the only 2-factor having $\ov{x}_i$ as its second letter is $e\ov{x}_i$, we find ourselves back at the contradiction described in Case 1. For $x = \ov{x}_i$ the argument is, once again, symmetric.

The cases are exhausted and the proof is complete.
\end{proof}

\newtheorem{satiff}[thm]{Proposition}
\begin{satiff}\label{thm:satiff} 
If $\phi$ is a 3-CNF boolean formula and $\al = \alp$ is the word of $\phi$, then $\phi$ is satisfiable if and only of $\al$ is unavoidable. 
\end{satiff}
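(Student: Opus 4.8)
The plan is to prove both implications by translating between satisfying assignments of $\phi$ and free deletion sequences for $\al = \alp$, using the literal-deletion lemmas already assembled. Throughout I adopt the convention that a literal is \emph{true} precisely when the corresponding letter of $X_{\alp}$ is the one deleted; under this convention a clause is satisfied exactly when at least one of its three literal-letters gets deleted. The two directions then amount to showing that a successful deletion sequence forces each clause to lose a literal-letter, and conversely that a satisfying assignment tells us which literal-letters to delete first.

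For the direction ``$\al$ unavoidable $\Rightarrow$ $\phi$ satisfiable,'' I would start from the structured deletion sequence supplied by Lemma \ref{thm:delorder}, whose first $n$ steps delete exactly one of $x_i,\ov{x}_i$ for each $i\le n$. These choices define a consistent assignment $\tau$. Consider the moment just before $e$ is deleted and let $B$ be the set of letters removed by then, so $e\notin B$. Since the remaining pattern $\al^B$ must still reduce to the empty word it is unavoidable, so Lemma \ref{thm:xxbar} forbids both $x_i$ and $\ov{x}_i$ from lying in $B$; hence the literal-letters in $B$ are exactly the ``true'' letters chosen in the first $n$ steps. Because $e$ is free for $\al^B$, the contrapositive of Lemma \ref{thm:dnotfree} shows that every clause $C_j$ has one of its literal-letters in $B$, that is, a true literal, so $\tau$ satisfies every clause.

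For the direction ``$\phi$ satisfiable $\Rightarrow$ $\al$ unavoidable,'' I would fix a satisfying assignment and exhibit a free deletion sequence explicitly. First delete the true literal of each variable in increasing index order; Lemma \ref{thm:xfree} guarantees each such letter is free at the moment it is removed. Since every clause is satisfied, each clause block $C_j$ now has a literal-letter missing, and I claim this frees $a_j$; after deleting $a_j$, Lemma \ref{thm:bfreed} makes $b_j,c_j,d_j$ free, so the whole block can be removed. Once all clause blocks are gone, $e$ retains only the two-factors coming from the variable blocks, and a direct inspection of $\Ga$ shows it is then free; delete it. The surviving false-literal letters and the separators $\zp$ afterwards carry only two-factors involving unique $\zp$'s, so they are free (the $\zp$'s by Lemma \ref{thm:zfree}) and may be deleted last, reducing $\al$ to the empty word.

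The main obstacle is the forward direction's freeness claims, particularly that deleting one literal-letter of $C_j$ actually makes $a_j$ free. The assembled lemmas establish \emph{non}-freeness (Lemmas \ref{thm:ynotfree} and \ref{thm:dnotfree}) but not the positive statements I need, so I must verify directly in $\Ga$ that no surviving path runs from a node having $a_j$ as first component to one having it as second component. The key leverage is that each separator letter $\zp$ occurs only once, so every two-factor of the form $a_j\zp$, $b_j\zp$, and so on is a dead end in $\Ga$, and the only potential cross-block route is through $e$; ruling out a returning path through $e$ and the shared literal-letters is the delicate bookkeeping step. I also need to keep the ``deleted literal $=$ true'' convention consistent between the two directions, so that the assignment extracted in the reverse direction is precisely the one used to build the sequence in the forward direction.
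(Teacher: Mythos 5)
Your reverse direction (``$\al$ unavoidable $\Rightarrow$ $\phi$ satisfiable'') is correct and uses the same ingredients as the paper -- Lemmas \ref{thm:delorder}, \ref{thm:xxbar} and \ref{thm:dnotfree} -- but packages them more cleanly: where the paper argues by contradiction from an unsatisfied clause, invoking Lemma \ref{thm:ynotfree} and then asserting that freeing $a_j,b_j,c_j,d_j,e$ would force both $x_i$ and $\ov{x}_i$ to be deleted, you simply look at the instant $e$ is deleted and read the assignment off the prefix $B$ of the deletion sequence via the contrapositive of Lemma \ref{thm:dnotfree}. That part of your proposal is sound, and arguably tighter than the printed argument.

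The forward direction, however, has a genuine gap, and it sits exactly at the step you flagged: the claim that deleting one true literal per variable makes every $a_j$ free is in fact \emph{false} as stated. Take $\phi=(x_1\lor x_2\lor x_3)\land(x_3\lor x_1\lor x_2)$ with satisfying assignment $x_1=x_2=1$, $x_3=0$, so Stage (1) deletes $x_1,x_2,\ov{x}_3$. Among the surviving $2$-factors are $a_1e$ (clause $1$), $a_2e$ and $a_2x_3$ (clause $2$), and $d_1x_3$, $d_1a_1$ (clause $1$); these five factors form exactly a chain of the kind in Definition \ref{thm:freedef} witnessing that $a_1$ is \emph{not} free -- consecutive factors share alternately their second and first letters, the chain starts with $a_1$ in first position and ends with $a_1$ in second position. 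So the ``returning path through $e$ and the shared literal-letters'' that you identified as the delicate point really does occur, and a clause-by-clause deletion in index order breaks down at its first step. (In this example one must delete the block of $C_2$ first -- $a_2$ is free because $d_2x_2$ lost its literal -- after which $a_1$ becomes free.) You should be aware that the paper's own proof shares this defect: it asserts ``Consequently $a_j$ is free after Stage (1)'' with no justification, and that is precisely the assertion that fails here. A complete proof needs an additional idea that neither you nor the paper supplies, e.g.\ a proof that the clause blocks can always be deleted in some well-chosen order, or a modification of the construction that kills such cross-clause paths through $e$.
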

\begin{proof}
Suppose $\phi$ with variables $x_1, \dots, x_n$ and clauses $C_1, \dots, C_m$ is satisfiable. 
Let $x_1 = e_1, x_2 = e_2, \dots, x_n = e_n$, with each $e_i\in\{0,1\}$, be a satisfying assignment for $\phi$. We show that $\alp$ will reduce to the empty set by deleting all its letters in the following stages:
\begin{enumerate}
\item For $i \le n$, delete $x_{i}$ if $e_i = 1$, otherwise delete $\ov{x}_{i}$. 
\item Next, for $j \le m$, delete $a_j$, $b_j$, $c_j$ and the $d_j$. 
\item Delete the letter $e$.
\item Delete the remaining $x_{i}$ and $\ov{x}_{i}$.
\item Delete the remaining characters $\zp$ in any order.
\end{enumerate}
Furthermore, every letter that is deleted will be free at the stage when the deletion happens.

Lemma \ref{thm:xfree} guarantees that every deletion in Stage (1) above is of a free letter. Since $\phi$ is satisfiable, every clause $C_j = (p_{1}\lor p_{2}\lor p_{3})$ has at least one literal that is set to $1$. If $p_1 = x_i = 1$, then $x_i$ is deleted in Stage (1). Consequently $a_j$ is free after Stage (1) and can be deleted in Stage (2). The deletion of $a_j$, in turn, causes $b_j$, $c_j$ and $d_j$ to become free. The remaining cases among $p_k = {x}_i = 1$ and $p_k = \ov{x}_i = 0$ lead to $a_j$, $b_j$, $c_j$ and $d_j$ being deleted in a similar fashion. We can therefore successfully complete the deletions in Stage (2).

After the completion of Stage (2) the only 2-factors (once again ignoring the $\zp$) containing $e$, are of the form $ep_i$ and $p_ie$, where for each $i$ we have either $p_i = x_i$ or $p_i = \ov{x}_i$. Furthermore, for each $i$ the same 2-factors are the only ones containing $p_i$. Therefore $e$ is free and consequently Stage (3) can be completed. 

After the completion of Stage (3), there are no 2-factors left that do not contain one of the $\zp$. Since every letter $\zp$ is unique, we can safely complete Stage (4). Now all that remains is letters of the form $\zp$ and hence, using Lemma \ref{thm:zfree}, we can delete the remaining letters. It follows, by Theorem \ref{thm:bembetter}, that $\alp$ is unavoidable, as desired.

Now suppose $\phi$ is unsatisfiable. For contradiction, suppose $\alp$ is unavoidable. Using Lemma \ref{thm:delorder}, we may assume that the first $n$ deletions are $p_1, p_2, \dots, p_n$ with, for every $i$,  either $p_i = x_i$ or $p_i = \ov{x}_i$. Define the following assignment on $\phi$: If $p_i = x_i$, then set the variable $x_i$ to $1$, otherwise set $x_i$ to $0$. Since $\phi$ is not satisfiable, we know that there is some clause $C_j$ that is not satisfied by our chosen assignment. But this means that none of the $p_i$ in the first $n$ deletions appear in $C_j$ and consequently none of the letters $a_j$, $b_j$, $c_j$ and $d_j$ are free after the first $n$ deletions, by Lemma \ref{thm:ynotfree}. In addition, by Lemma \ref{thm:dnotfree}, we have that $e$ is not free. In order to free any of these letters, we have to delete at least one letter $x_i$ or $\ov{x}_i$ which has, thus far not been deleted. But this means, for some $i$, both $x_i$ and $\ov{x}_i$ have been deleted. Using Lemma \ref{thm:xxbar}, we have a contradiction.
\end{proof}

%%%%%%%%%%%
%
% NPC result here
%
%%%%%%%%%%%

\section{Unavoidability and Computational Complexity}\label{scomplexity}

We define the {\sl Word Unavoidability Problem} as follows: Given a pattern $\al$ over a finite alphabet, determine if $\al$ is unavoidable. We refer to the set of unavoidable patterns as $WU$.

\newtheorem{unnpc}[thm]{Theorem}
\begin{unnpc}\label{thm:unnpc}
The Word Unavoidability Problem is $NP$-complete.
\end{unnpc}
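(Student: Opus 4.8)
The plan is to prove $NP$-completeness by establishing membership in $NP$ and $NP$-hardness separately. For membership, I would rely on the observations accumulated in Sections~\ref{sfree} and the discussion following {\tt IS\_FREE}: by Theorem~\ref{thm:bembetter}, a pattern is unavoidable if and only if it reduces to the empty word by iteratively deleting free sets. A polynomial-size certificate is simply the sequence of free sets (equivalently, by Lemma~\ref{thm:delsingle} in the relevant cases, a sequence of single free letters) witnessing this reduction. Given such a sequence, a verifier checks at each stage that the designated set is indeed free for the current pattern; freeness of a single letter is decidable in $O(n^2)$ time by {\tt IS\_FREE}, and there are at most $n$ deletion stages, so the whole certificate is verified in polynomial time. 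Hence $WU \in NP$.

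For $NP$-hardness, the work is already essentially done by the reduction constructed in Section~\ref{slogic}. I would argue that the map $\phi \mapsto \alp$ sending a 3-CNF formula to its word is a polynomial-time many-one reduction from \textsc{3-SAT} to $WU$. Proposition~\ref{thm:satiff} gives exactly the correctness of this reduction: $\phi$ is satisfiable if and only if $\alp$ is unavoidable. What remains is to verify that the construction is computable in polynomial time and that $\alp$ has polynomial length in the size of $\phi$. Counting the letters introduced in the construction, for a formula with $n$ variables and $m$ clauses we use $2n$ literal letters, $4m$ clause letters, the single letter $e$, and a number of separator letters $\zp$ bounded by the total number of $2$-factors emitted (a constant number per variable and per clause, hence $O(n+m)$). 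The word $\alp$ therefore has length $O(n+m)$, and each factor is produced by a local, table-driven rule depending only on the form of the clause being processed, so $\alp$ is constructed in linear time.

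Since \textsc{3-SAT} is $NP$-complete, the existence of this polynomial-time reduction establishes that $WU$ is $NP$-hard. Combined with membership in $NP$, this shows $WU$ is $NP$-complete, completing the proof. I anticipate that the hardness direction requires no new mathematical content here, as its correctness is entirely delegated to Proposition~\ref{thm:satiff}; the only obligations are the routine polynomiality bookkeeping just described and an appeal to the standard $NP$-completeness of \textsc{3-SAT} (see \cite{hop79}). The main subtlety worth flagging explicitly is the verifier for membership: one must confirm that \emph{checking} a proposed free-deletion certificate is genuinely polynomial, which follows because each freeness test is the polynomial-time algorithm {\tt IS\_FREE} applied to a pattern no longer than the original, invoked at most $n$ times.
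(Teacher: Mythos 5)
The hardness half of your proposal coincides exactly with the paper's: polynomial-time constructibility of $\alp$ plus Proposition~\ref{thm:satiff}. The membership half, however, takes a different route from the paper and contains a genuine gap. You certify unavoidability via Theorem~\ref{thm:bembetter} by a sequence of \emph{free sets}, and then justify polynomial-time verification by saying that ``freeness of a single letter is decidable in $O(n^2)$ time by \texttt{IS\_FREE}.'' But being a free set is strictly stronger than having every member individually free: Definition~\ref{thm:freesetdef} forbids a zigzag between every \emph{pair} $x,y$ in the set, and one can have words in which $x$ and $y$ are each free while $\{x,y\}$ is not a free set (e.g.\ in $xababy$ both $x$ and $y$ are free, yet the factors $xa$, $ba$, $by$ form a zigzag from $x$ to $y$). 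A verifier that only runs \texttt{IS\_FREE} on the members of each certificate set is therefore unsound: reducibility to the empty word by deleting sets of individually free letters does not imply unavoidability, since Theorem~\ref{thm:bembetter} requires genuine free sets. The gap is fixable --- Lemma~\ref{thm:gfree} extends naturally to pairs (check that no path runs from a node with $x$ as first component to a node with $y$ as second component, for all $x,y$ in the set), and this is still polynomial --- but that extension is neither in the paper nor supplied by you, and it is exactly the missing piece your argument needs.

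Your parenthetical appeal to Lemma~\ref{thm:delsingle} does not repair this: that lemma applies only to words of the form $\alp$ arising from 3-CNF formulas, not to arbitrary patterns, so it cannot be used to reduce general certificates to single-letter deletion sequences; indeed single-letter deletion sequences are not complete for general unavoidable patterns (this is precisely why the B.E.M.\ characterization includes identifications and the Sapir formulation uses sets). The paper sidesteps the whole issue by arguing membership from Theorem~\ref{thm:bemmain} instead: its nondeterministic procedure \texttt{IS\_UNAVOIDABLE} guesses a sequence of single-letter deletions \emph{and letter identifications}, and for that certificate format the single-letter test \texttt{IS\_FREE} genuinely suffices at each step. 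Either adopt that formulation, or add the pairwise freeness test; as written, your verifier accepts certificates it should reject.
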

\begin{proof}
We note that, given a 3-CNF boolean formula $\phi$, the construction of the word $\alp$ of $\phi$ requires a number of computational steps that is linear in the length of $\phi$: For every variable $x_i$, we need to add a factor $dx_i\ov{x}_id$. For every clause we need to add a constant number of factors that are derived purely from the literals in that clause.

Proposition \ref{thm:satiff} therefore leaves us very little work to do. All that remains is to prove $WU\in NP$. 
Using Theorem \ref{thm:bemmain} and the algorithm {\tt IS\_FREE} above, we write the following test for unavoidability:
\hfill\break
\par\noindent\rule{\textwidth}{0.4pt}

\begin{verbatim}
IS_UNAVOIDABLE(alpha)
    A[] = the distinct letters in alpha
    B[] = the distinct letters in alpha and all pairs of letters in A
    n = |B|
    nondeterministically guess the permutation pi on [n]
        for i = 1 to n:
            if B[pi(i)] is a single letter and occurs in alpha:
               x = B[pi(i)]
               if IS_FREE(alpha, x):
                    delete every occurrence of x from alpha
               else:
                    nondeterministic guess dies
            else:
               x, y = B[pi(i)]
               if x and y are both letters in alpha:
                  replace every occurrence of y in alpha with x
        return True  
     return False             
\end{verbatim}
\par\noindent\rule{\textwidth}{0.4pt}
\hfill\break

Each branch of nondeterminism completes at most $|A_{\al}|$ deletions and $|A_{\al}|^2$ identifications of letters.  Since {\tt IS\_FREE} runs in polynomial time, so does each branch of {\tt IS\_UNAVOIDABLE}. The number of branches of nondeterminism is bounded from above by the number of permutations on $|A_{\al}| + |A_{\al}|^2$.
\end{proof}

%%%%%%%%%%%%%%%%%%%%%%
%
% End Section Computation
%
%%%%%%%%%%%%%%%%%%%%%%

\section{Conclusion}

Many interesting questions remain regarding the complexity of unavoidable patterns \cite{cur93}. The bounds established in Theorem \ref{thm:mainres} above are not primitive recursive. We do not know if there is a primitive recursive upper bound, nor do we know what lower bounds exist, for any significantly general subset of patterns.

\section{Acknowledgements}
This article has been written in partial fulfillment  of the requirements for the degree Doctor of Philosophy in Operations Research at the University of South Africa. Special and sincere thanks go to Willem Fouch\'e and Petrus Potgieter for continuous insight and guidance. Many thanks also to Narad Rampersad and James Currie who read earlier versions of this paper and communicated problems to me. 

%%%%%%%%%%%%%%%%%%%%   End of main body of article
%
%                             References
%
%   BiBTeX users uncomment the following line:
%
\bibliographystyle{jloganal}
\bibliography{cs}{}
%

%\begin{thebibliography}

%\end{thebibliography}

\end{document}